\documentclass{article}


\usepackage[ruled,linesnumbered]{algorithm2e}

\usepackage[accepted]{icml2024}


\usepackage{amsmath,amsfonts,bm}







\def\sref#1{\S~\ref{#1}}


\def\eqref#1{equation~\ref{#1}}
\def\Eqref#1{Eq.(\ref{#1})}








\def\1{\bm{1}}










\def\va{{\bm{a}}}

\def\vs{{\bm{s}}}
\def\vt{{\bm{t}}}

\def\vw{{\bm{w}}}
\def\vx{{\bm{x}}}



\def\mX{{\bm{X}}}

\DeclareMathAlphabet{\mathsfit}{\encodingdefault}{\sfdefault}{m}{sl}
\SetMathAlphabet{\mathsfit}{bold}{\encodingdefault}{\sfdefault}{bx}{n}











\newcommand{\E}{\mathbb{E}}



\usepackage[utf8]{inputenc} 
\usepackage[T1]{fontenc}    
\usepackage{hyperref}       
\usepackage{url}            
\usepackage{booktabs}       
\usepackage{amsfonts}       
\usepackage{nicefrac}       
\usepackage{microtype}      
\usepackage{xcolor}         
\usepackage{amsmath,amssymb,amsfonts,amsthm,mathrsfs}
\usepackage[page,title,titletoc,header]{appendix}
\usepackage{enumerate}
\usepackage{color}
\usepackage{graphicx}
\usepackage{indentfirst}
\usepackage[mathscr]{eucal}
\usepackage{multirow}
\usepackage{algpseudocode}
\newtheorem{theorem}{Theorem}
\newtheorem{lemma}{Lemma}[section]
\newtheorem{proposition}{Proposition}

\newtheorem{definition}{Definition}
\newtheorem{remark}{Remark}
\newtheorem{example}{Example}
\usepackage{nicefrac}
\usepackage{subfigure}
\usepackage{enumitem}
\usepackage{graphicx}
\usepackage{wrapfig}

\usepackage{CJKutf8}
\usepackage{bbm}
\usepackage{colortbl}

\newcommand{\eg}{{\it e.g.}}

\newcommand{\ie}{{\it i.e.}}


\icmltitlerunning{Identifiability Matters: Revealing the Hidden Recoverable Condition in Unbiased Learning to Rank}

\begin{document}

\twocolumn[
\icmltitle{Identifiability Matters: Revealing the Hidden Recoverable Condition in Unbiased Learning to Rank}

\icmlsetsymbol{equal}{*}

\begin{icmlauthorlist}
    \icmlauthor{Mouxiang Chen}{zju}
    \icmlauthor{Chenghao Liu}{salesforce}
    \icmlauthor{Zemin Liu}{nus}
    \icmlauthor{Zhuo Li}{statestreet}
    \icmlauthor{Jianling Sun}{zju}
\end{icmlauthorlist}

\icmlaffiliation{zju}{Zhejiang University}
\icmlaffiliation{salesforce}{Salesforce Research Asia}
\icmlaffiliation{nus}{National University of Singapore}
\icmlaffiliation{statestreet}{State Street Technology (Zhejiang) Ltd}

\icmlcorrespondingauthor{Chenghao Liu}{chenghao.liu@salesforce.com}
\icmlcorrespondingauthor{Zhuo Li}{lizhuo@zju.edu.cn}

\icmlkeywords{learning to rank, unbiased learning to rank, identifiability}

\vskip 0.3in
]



\printAffiliationsAndNotice{}  

\begin{abstract}

Unbiased Learning to Rank (ULTR) aims to train unbiased ranking models from biased click logs, by explicitly modeling a generation process for user behavior and fitting click data based on examination hypothesis. Previous research found empirically that the true latent relevance is mostly recoverable through click fitting. However, we demonstrate that this is not always achievable, resulting in a significant reduction in ranking performance. This research investigates the conditions under which relevance can be recovered from click data in the first principle. We initially characterize a ranking model as \textit{identifiable} if it can recover the true relevance up to a scaling transformation, a criterion sufficient for the pairwise ranking objective. Subsequently, we investigate an equivalent condition for identifiability, articulated as a graph connectivity test problem: the recovery of relevance is feasible if and only if the \textit{identifiability graph} (IG), derived from the underlying structure of the dataset, is connected. The presence of a disconnected IG may lead to degenerate cases and suboptimal ranking performance. To tackle this challenge, we introduce two methods, namely \textit{node intervention} and \textit{node merging}, designed to modify the dataset and restore the connectivity of the IG. Empirical results derived from a simulated dataset and two real-world LTR benchmark datasets not only validate our proposed theory but also demonstrate the effectiveness of our methods in alleviating data bias when the relevance model is unidentifiable.

\end{abstract}

\section{Introduction}

The utilization of click data for Learning to Rank (LTR) methodologies has become prevalent in contemporary information retrieval systems. The accumulated feedback effectively demonstrates the value of individual documents to users \citep{agarwal2019general} and is comparatively effortless to amass on an extensive scale. Nevertheless, inherent biases stemming from user behaviors are presented within these datasets \citep{joachims2007evaluating}. One example is position bias \citep{joachims2005accurately}, wherein users exhibit a propensity to examine documents situated higher in the rankings, causing clicks to be biased with the position. Removing these biases is critical as they impede learning the correct ranking model from click data. To address this issue, Unbiased Learning to Rank (ULTR) is developed to mitigate these biases \citep{joachims2017unbiased}. The central idea is to explicitly model a generation process for user clicks using the \textbf{examination hypothesis}. This hypothesis posits that each document has a probability of being observed (depending on certain bias factors, \eg, position \citep{joachims2017unbiased} or context \citep{fang2019intervention}), and subsequently clicked based on its relevance to the query (depending on ranking features encoding the query and document). It can be formulated as:
\begin{align*}
    P(\text{click})=\underbrace{P(O\mid \text{bias factors})}_{\text{observation model}}\cdot \underbrace{P(R\mid \text{ranking features})}_{\text{ranking model}}.
\end{align*}
In practice, a joint optimization is utilized to optimize both the observation and ranking models based on the examination hypothesis \citep{wang2018position,ai2018unbiased,zhao2019recommending,guo2019pal}. This approach efficiently accommodates the fitting of click data.

However, our focus extends beyond predicting the click probability; we are more interested in recovering the true relevance through the ranking model, which is the primary objective of ULTR. Previous empirical studies have suggested that this objective can be achieved in most cases after the click probability is fitted \citep{ai2018unbiased,wang2018position,ai2021unbiased}. Regrettably, no existing literature provides a theoretical guarantee to substantiate this claim, and it is possible that the relevance becomes \textit{unrecoverable} in certain cases. This was demonstrated in \citet{oosterhuis2022reaching}, which constructed a simplified dataset, revealing that a perfectly trained click model on this dataset can still produce inaccurate and inconsistent relevance estimates. Despite the limited scale of their example, we argue that in real scenarios with large-scale data, the unrecoverable phenomenon may persist, particularly in the presence of excessive bias factors. We illustrate it in the following example.



\begin{example}
\label{exm:wrong_ranking}
Consider a large-scale dataset where each displayed query-document pair comes with a \underline{\textbf{distinct}} bias factor\footnote{Achieving this is possible by incorporating sufficiently fine-grained bias factors identified in previous studies, such as positions, other document clicks \citep{vardasbi2020cascade,chen2021adapting}, contextual information \citep{fang2019intervention}, representation styles \citep{liu2015influence,zheng2019constructing}, or various other contextual features \citep{ieong2012domain,sun2020eliminating,sarvi2023impact}.}. Unfortunately, in such cases, it becomes challenging to discern whether the influence on relevance arises from the ranking feature or the bias factor itself \citep{chen2022lbd,zhu2020unbiased}. One might train a naive ranker that predicts a constant one, coupled with an observation model that maps the unique bias factor to the true click probability. While the product of the two models yields an accurate click estimation, the estimated relevance is erroneous.
\end{example}

Given the recent research emphasis on enhancing datasets by incorporating additional bias factors to refine observation estimation \citep{vardasbi2020cascade,chen2021adapting,chen2022lbd,chen2023multi}, we contend that there exists an urgent need to address a fundamental question: \emph{under what circumstances can the relevance be recovered?} However, tackling this problem is notably challenging as it heavily relies on the exact data collection procedure \citep{oosterhuis2022reaching}. To the best of our knowledge, existing research has not yet presented the condition of relevance recovery at a fundamental level.

In this work, we present a general identifiability framework to address the core of relevance recovery. Importantly, this framework starts from the most basic examination hypothesis and does not impose additional requirements on the model implementation, optimization process, or specific types of bias. Focusing on the ranking objective, we define a ranking model as \textbf{\emph{identifiable}} in \sref{sec:identifiability} if it can recover the true relevance probability \emph{up to a scaling transformation}. The absence of identifiability can lead to degenerate scenarios and suboptimal ranking performance. We establish that the identifiability of a ranking model is linked to the underlying structure of the dataset, which can be translated into a practical graph connectivity test problem based on an \emph{identifiability graph} (IG) related to the dataset. Specifically, if and only if the IG is connected, we can ensure the identifiability of the ranking model. Disconnections in the IG may result in inaccurate rankings. Theoretical and empirical analyses unveil that the identifiability probability is influenced by the dataset's size, bias factors, and features. Smaller datasets, more features, or an increased number of bias factors elevate the likelihood of IG disconnection.
Furthermore, our analysis of the TianGong-ST dataset \citep{chen2019tian} indicates that the unidentifiability issue is present in the real world. These insights provide valuable guidance for the future design of ULTR datasets and algorithms.

Building upon the aforementioned theory, we delve into strategies for addressing datasets characterized by a disconnected IG. We introduce two methods aimed at enhancing IG connectivity to ensure the identifiability of ranking models: (1) \textit{node intervention} (\sref{sec:node_intervention}), which swaps documents between two bias factors to augment the dataset. While the intervention is common in ULTR \citep{joachims2017unbiased}, our approach explores the minimal required number of interventions and significantly reduces the online cost; and (2) \textit{node merging} (\sref{sec:node_merging}), which merges two bias factors and assumes identical observation probabilities. We conducted extensive experiments using a fully simulated dataset and two real-world LTR datasets to validate our theorems and demonstrate the efficacy of our methods when dealing with disconnected IGs. 

To the best of our knowledge, we are the first to study the identifiability of ranking models for ULTR in the first principle. Our main contributions are summarized as follows:
\begin{enumerate}[leftmargin=*]
    \item We propose the concept of identifiability, ensuring the capacity to recover the true relevance from click data, and frame it as a graph connectivity test problem from the dataset perspective.
    \item We propose model-agnostic methods to handle the unidentifiable cases by restoring graph connectivity.
    \item We provide both theoretical guarantees and empirical studies to verify our proposed framework.
\end{enumerate}
\section{Preliminaries}

Given a query $q\in \mathcal{Q}$, the goal of learning to rank is to learn a ranking model to sort a set of documents. Let $\vx\in\mathcal{X}$ denote the query-document ranking features, and the ranking model aims to estimate a relevance score with $\vx$. In practical scenarios where acquiring ground truth relevance at scale is challenging, users' click logs are frequently employed as labels for training the model \cite{joachims2005accurately}. While user clicks usually exhibit bias compared to the true relevance, researchers propose examination hypothesis to decompose the biased clicks into relevance probability and observation probability, formulated as:
\begin{align}
    \label{eq:examination_hypothesis}
    c(\vx, \vt)=r(\vx)\cdot o(\vt),
\end{align}
where $c(\vx, \vt)$, $r(\vx)$, and $o(\vt)$ denote the probabilities that the document is clicked, relevant, and observed, respectively. In this paper, we focus on the scenario in which the true observation probability $o$ and relevance probability $r$ are both \textit{unknown}. $\vt\in\mathcal{T}$ denotes bias factors that introduce clicks to be biased, such as position \citep{joachims2017unbiased}, other document clicks \citep{vardasbi2020cascade,chen2021adapting}, contextual information \citep{fang2019intervention} or the representation style \citep{liu2015influence}. Let $\mathcal{D}=\{(\vx_i, \vt_i)\}_{i=1}^{|\mathcal{D}|}$ denote pairs of ranking features and bias factors. To simplify the analysis, we suppose all bias factors $\vt\in\mathcal T$ and features $\vx\in\mathcal X$ appear in $\mathcal D$. By explicitly modeling the bias via observation, we can attain an unbiased estimate of the ranking objective.

To jointly obtain the relevance score and observation score, we optimize a ranking model $r'(\cdot)$ and an observation model $o'(\cdot)$ to fit clicks in the form of the examination hypothesis. For example, two-tower \cite{guo2019pal} used the following objective:
\begin{align}
    \label{eq:loss}
    \mathcal L = \sum_{i=1}^{|\mathcal D|} l(r'(\vx_i)\cdot o'(\vt_i), c_i),
\end{align}
where $c_i$ denotes the click, and $l(\cdot,\cdot)$ denotes a loss function of interest, such as mean square error or cross-entropy error.
\section{Identifiability}

\label{sec:identifiability}

Most current work presumes that optimizing the two models in the form of \Eqref{eq:examination_hypothesis} can yield an accurate ranking model \citep{wang2018position,chen2022lbd,chen2023multi}.
Regrettably, there are no guarantees regarding this assumption. While it is established that the product of the outputs from the two models is accurate, there are instances, as demonstrated in Example \ref{exm:wrong_ranking}, where this approach results in poor ranking performance. Our objective is to investigate a fundamental condition under which the underlying latent relevance function can be recovered (\textit{i.e.,} \textbf{identifiable}), formulated as:
\begin{align*}
    r(\vx)\cdot o(\vt)=r'(\vx)\cdot o'(\vt) \quad \Longrightarrow \quad r = r'.
\end{align*}

In this paper, we intentionally prevent imposing specific constraints on the models of $r'$ and $o'$, making our approach broadly adaptable to existing work based on the examination hypothesis. It is worth noting that directly recovering an exact relevance model is impractical, as scaling $r(\cdot)$ by a factor of $n$ and $o(\cdot)$ by $\nicefrac{1}{n}$ would leave their product unchanged. In practice, we are often interested in making relevance identifiable up to a scaling transformation, which is sufficient for pairwise ranking objectives. Consequently, we introduce the following definition of identifiability:

\begin{definition}[Identifiability]
\label{def:identifiable}
We say that the relevance model is identifiable, if:
\begin{align*}
    r(\vx)\cdot o(&\vt)=r'(\vx)\cdot o'(\vt), \quad\forall (\vx, \vt)\in \mathcal{D} \\
    &\Longrightarrow \quad  \exists C>0, \text{ s.t. }r(\vx)=C r'(\vx), \forall  \vx\in \mathcal X.
\end{align*}
\end{definition}

Identifiability serves as a sufficient condition for ensuring accurate ranking. The absence of such guarantees can lead to degenerate scenarios, as illustrated in Example \ref{exm:wrong_ranking}. Our following main result (with proof in Appendix \ref{sec:app_proof_1}) establishes that identifiability is intrinsically linked to the dataset's underlying structure, which can be readily mined.

\begin{theorem}[\textbf{Main result:} Equivalent condition of identifiability]
\label{thm:graph}
The relevance model is identifiable, if and only if an undirected graph $G=(V,E)$ is connected, where $V$ is a node set and $E$ is an edge set, defined as:
\begin{align*}
    V=&\{v_1, v_2, \cdots, v_{|\mathcal{T}|}\},\\
    E=&\left\{(v_s,v_t)\mid \exists \vx\in \mathcal X \text{, s.t. }(\vx,\vs)\in \mathcal D \land (\vx,\vt)\in \mathcal D\right\},
\end{align*}
We refer to this graph as \textbf{identifiability graph} (IG).
\end{theorem}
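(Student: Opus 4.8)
The plan is to strip the functional identity down to a statement about two ratio functions and then read the theorem directly off the combinatorics of $G$. Throughout I assume the probabilities are strictly positive, so that I may divide. Given any $r',o'$ with $r(\vx)o(\vt)=r'(\vx)o'(\vt)$ on all of $\mathcal D$, I would set $\rho(\vx)=r(\vx)/r'(\vx)$ for $\vx\in\mathcal X$ and $\omega(\vt)=o'(\vt)/o(\vt)$ for $\vt\in\mathcal T$. The defining identity then becomes simply
\begin{align*}
\rho(\vx)=\omega(\vt)\qquad\text{for all }(\vx,\vt)\in\mathcal D,
\end{align*}
and the conclusion $r=Cr'$ is equivalent to $\rho$ being a single positive constant on $\mathcal X$. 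So the whole theorem reduces to: $\rho$ is forced to be constant $\iff$ $G$ is connected.

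The bridge between the two sides is one observation I would establish first: by the definition of $E$, an edge $(v_s,v_t)$ exists exactly when some $\vx$ has $(\vx,\vs),(\vx,\vt)\in\mathcal D$, and in that case $\omega(\vs)=\rho(\vx)=\omega(\vt)$. Thus $\omega$ is constant along every edge, hence constant on each connected component of $G$. For the forward direction (connected $\Rightarrow$ identifiable), connectivity collapses this to a single constant $C$ on all of $V$; since every feature occurs in at least one pair $(\vx,\vt)\in\mathcal D$ (all features appear in $\mathcal D$), we get $\rho(\vx)=\omega(\vt)=C$ for every $\vx$, which is exactly identifiability.

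For the converse I would prove the contrapositive, building a counterexample when $G$ splits into components $K_1,\dots,K_m$ with $m\ge 2$. The structural fact to nail down is that each feature attaches to a unique component: the bias factors paired with a fixed $\vx$ are pairwise adjacent through $\vx$, so they form a clique and lie in one component $\kappa(\vx)$. Moreover, the reduction above shows these are the \emph{only} solutions, since $\rho$ and $\omega$ must agree on pairs and be locally constant; hence I am free to choose one positive constant $C_k$ per component and set $r'(\vx):=r(\vx)/C_{\kappa(\vx)}$ and $o'(\vt):=C_{(\text{index of }\vt)}\,o(\vt)$, which satisfy $r'o'=ro$ on $\mathcal D$ by construction. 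Taking the $C_k$ not all equal makes $\rho$ non-constant, so no global $C$ exists and the model is unidentifiable.

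The step that needs genuine care, rather than bookkeeping, is the feasibility of this last construction: $r'$ and $o'$ must again be valid probabilities, i.e. land in $[0,1]$. For a component $K_k$ the admissible scalings are $C_k\in[\max_{\vx\in K_k}r(\vx),\,1/\max_{\vt\in K_k}o(\vt)]$, an interval that always contains $1$ and, provided the true probabilities stay bounded away from the boundary, also contains an open neighbourhood of $1$; with $m\ge 2$ components I can then pick the $C_k$ distinct inside their respective intervals. I expect the only real subtlety to be this range/nondegeneracy point, together with the standing positivity assumption that licensed passing to ratios in the first place; everything else is the propagation of a single constant through the connectivity structure of $G$.
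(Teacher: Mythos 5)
Your proof is correct and takes essentially the same route as the paper's: the identical ratio functions ($\rho = r/r'$ constant along edges, hence on connected components) for the ``if'' direction, and the identical per-component scaling counterexample for the ``only if'' direction (your $r'(\vx) = r(\vx)/C_{\kappa(\vx)}$, $o'(\vt) = C_k\,o(\vt)$ is the paper's $\alpha,\beta$ construction on the two sides of a disconnection, generalized to $m$ components). Your extra feasibility check that the perturbed models stay in $[0,1]$ is a refinement the paper does not perform --- its proof scales freely, implicitly reading Definition~1 without a range constraint on $r',o'$ --- so your nondegeneracy caveat is unnecessary under the paper's reading, though it is a legitimate point if one insists the competing models be genuine probabilities.
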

\begin{remark}
The relevance identifiability is equivalent to a graph connectivity test problem. The IG is constructed as follows: we first create nodes for each bias factor $t\in\mathcal T$. If there exists a feature appearing with two bias factors together, add an edge between the two nodes. Theorem \ref{thm:graph} establishes connections to recent ULTR research \citep{agarwal2019estimating,oosterhuis2022reaching,zhang2023towards}, which are elaborated in \sref{sec:related-work}.
\end{remark}

\begin{figure}[t]
    \centering
    \includegraphics[width=0.48\textwidth]{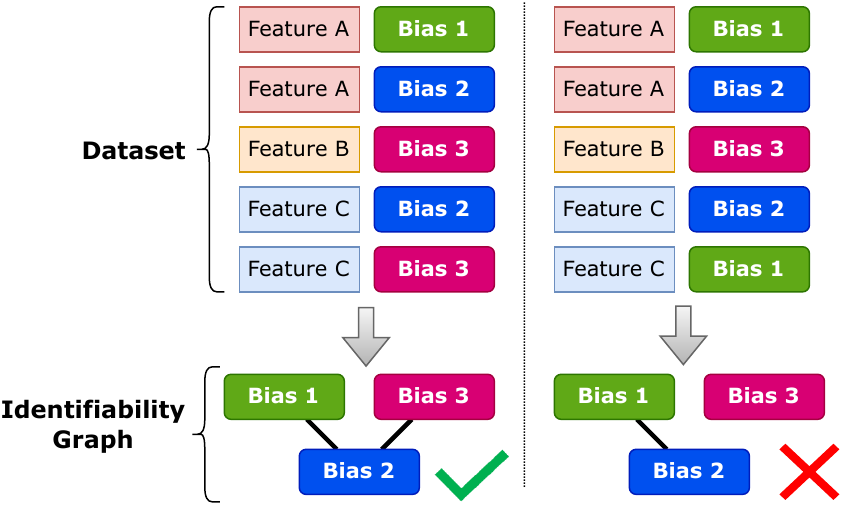}
    \caption{Examples for identifiable case and unidentifiable case.}
    \label{fig:identifiable_example}
\end{figure}

Figure \ref{fig:identifiable_example} illustrates examples for applying Theorem \ref{thm:graph} to verify the identifiability of the datasets. In the left figure, bias factors 1 and 2 are connected through feature 1, and bias factors 2 and 3 are connected through feature 3. As a result, the graph is connected and the relevance is identifiable. Conversely, the right figure depicts a scenario where bias factor 3 remains isolated, leading to unidentifiability of relevance. Based on Theorem \ref{thm:graph}, we illustrate the identifiability check algorithm in Appendix \ref{sec:app_algo_idcheck}.

\begin{figure*}[htbp]
    \centering
    \subfigure[node intervention]{
        \includegraphics[width=0.48\textwidth]{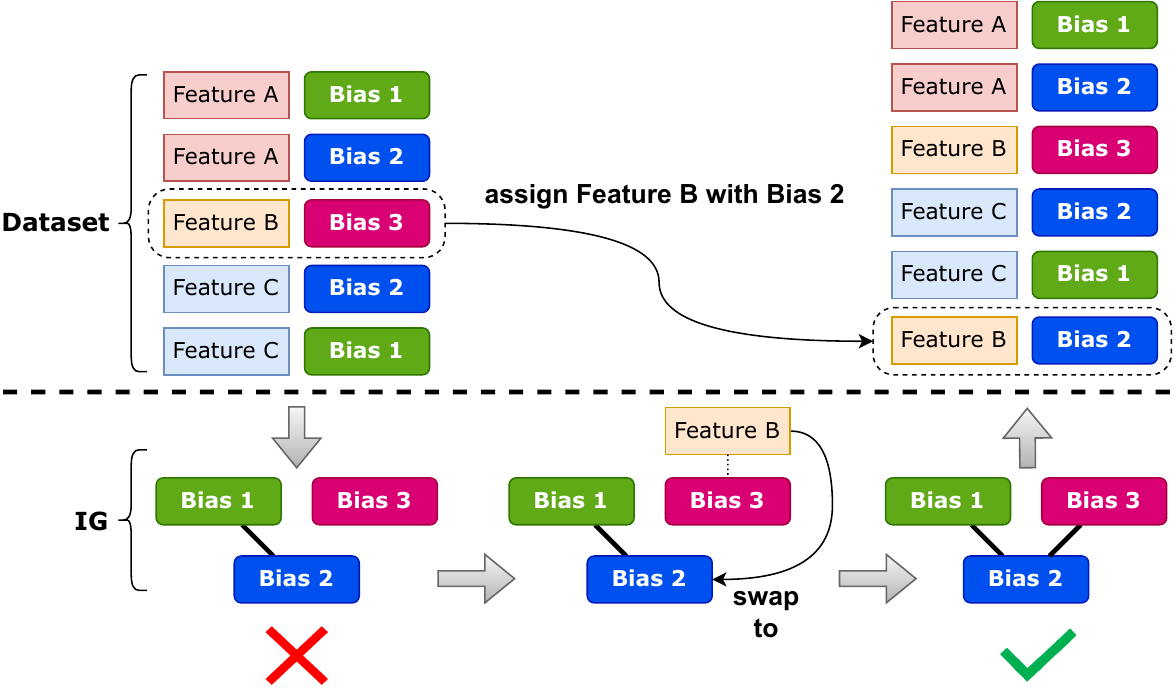}
        \label{fig:method_ni}
    }
    \subfigure[node merging]{
        \includegraphics[width=0.48\textwidth]{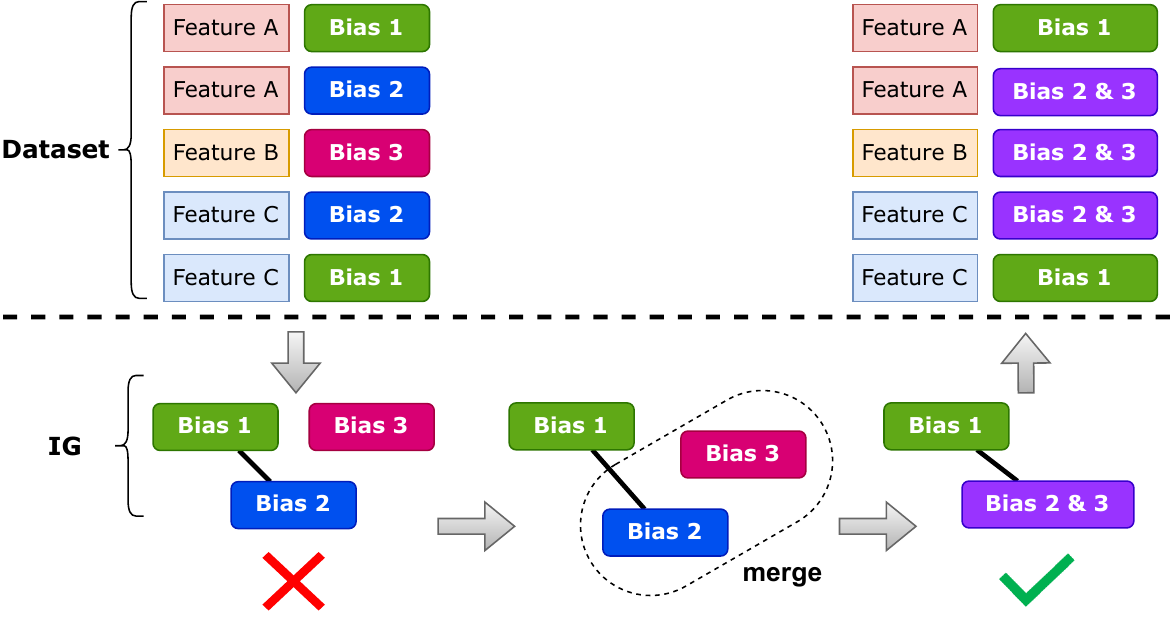}
        \label{fig:method_nm}
    }
    \caption{Illustrations for the proposed two methods to deal with unidentifiable datasets. In node intervention, we swap Feature B related to Bias 3 to Bias 2, which connects the two disconnected nodes. In node merging, we merge Bias 2 and Bias 3 into a new node 2 \& 3, which indicates that Bias 2 and Bias 3 will have the same estimated observation. Both methods are applied to datasets before ULTR training.}
    \label{fig:illustration_method}
\end{figure*}

Our next objective is to ascertain the probability of a ranking model being identifiable, particularly focusing on how it scales with the sizes $|\mathcal{D}|$, $|\mathcal{X}|$, and $|\mathcal{T}|$, to offer an intuitive understanding. However, accurately calculating this probability is intricate due to the uncertain generation process of $\mathcal{D}$. To address this, we consider a simplified distribution for $\mathcal{D}$ in the following example (proof in Appendix \ref{sec:app_proof_2}). 

\begin{example}[Estimation of identifiability probability]
\label{thm:identifiability_prob}
Considering the following simplified example: each feature $\vx\in \mathcal X$ and bias factor $\vt\in \mathcal T$ are selected independently and uniformly to construct a dataset $\mathcal D$. Then the probability of identifiability can be estimated by:
\begin{align*}
P(\text{identifiability})&  \sim 1 - |\mathcal T| \exp\left(
    -|\mathcal D| + f
\right),\\
\text{where } f &= |\mathcal X||\mathcal {T}|\log \left[
    2 - \exp\left(
        -\frac{|\mathcal D|}{|\mathcal X||\mathcal {T}|}
    \right)
\right].
\end{align*}


\end{example}
\vspace{-3mm}
\begin{remark}
Note that $\lim_{|\mathcal {T}| \rightarrow +\infty} f = |\mathcal D|$. This implies a conclusion that when $|\mathcal T|$ is sufficiently large, the probability of identifiability decays, which requires a sufficiently large dataset $|\mathcal D|$ to offset this effect. While it is derived from a simplified case, this conclusion is consistent with our empirical evaluations in a more realistic setting (\sref{sec:experiment_large}). Given the recent research focused on enhancing datasets with extra bias factors while keeping dataset size constant \citep{chen2021adapting,chen2023multi,sarvi2023impact}, these insights provide potential valuable guidance for future work.
\end{remark}
\section{Dealing with unidentifiable dataset}

In this section, we discuss how to deal with datasets whose relevance is unidentifiable. We propose two methods applied on datasets, namely \textit{node intervention} and \textit{node merging}, to establish connectivity within the IG. The former method necessitates the inclusion of supplementary data which enables the accurate recovery of relevance, while the latter method only needs existing data but may introduce approximation errors. In practice, the choice depends on the specific requirements of the problem. 

\subsection{Node intervention}

\label{sec:node_intervention}

Given that unidentifiability results from the incomplete dataset, one method is to augment datasets by swapping some pairs of documents between different bias factors (mainly positions). Swapping is a widely adopted technique that can yield accurate observation probabilities \citep{joachims2017unbiased}. However, existing methods are often heuristic in nature, lacking a systematic theory to guide the swapping process, and sometimes performing unnecessary swapping operations. Instead, we leverage IGs to identify a minimal set of critical documents for effective swapping.

Our basic idea is that (1) find the connected components in the IG; (2) for any two components, find a node in each of them; (3) for these two nodes (representing two bias factors, \eg, $\vt_1$ and $\vt_2$), find a feature (denoted by $\vx$) related to one of them (denoted by $\vt_1$), and swap it to the other (denoted by $\vt_2$). This creates a new data point $(\vx, \vt_2)$. Since $(\vx, \vt_1)$ exists, this action bridges $\vt_1$ and $\vt_2$ in the IG and thus connects two components. Repeat this process until the IG is connected. We refer to this method as \textbf{node intervention}, which is illustrated in Figure \ref{fig:method_ni}.

It's worth mentioning that there are many choices to select the features and bias factors. To be specific, if we assume there are infinity data and the click probability $r(x)o(t)$ can be observed accurately, any intervention target that makes the IG connected can lead to identifiability, with no theoretical distinction in quality. However, note that we can only observe $N$ samples of $r(x)o(t)$ in practice, which causes some choices to be less effective. For example, the observation probability of some bias factors is relatively low (\eg, the last position in the list), necessitating more clicks to obtain a valid click rate estimation. These bias factors are not suitable for swapping. 

Therefore, to choose the optimal intervention target, instead of simply assuming that we observe an accurate click probability $r(\vx)\cdot o(\vt)$, here we assume that we can only observe a random variable for click rate which is an average of $N$ clicks: $\nicefrac{1}{N}\sum_{i=1}^{N}c_i$. $c_i\in\{0, 1\}$ is a binary random variable sampling from a probability $r(\vx)\cdot o(\vt)$, indicating a click occurrence. Definition \ref{def:identifiable} can be seen as a special case when $N\rightarrow + \infty$.
Based on it, we establish the following proposition, with its proof delegated to Appendix \ref{sec:app_proof_3}.

\begin{proposition}
\label{thm:variance}
For a feature $\vx$ and two bias factors $\vt_1, \vt_2$, suppose $r'(\vx)\cdot o'(\vt) =\nicefrac{1}{N}\sum_{i=1}^{N}c_i(\vx, \vt)$, where $\vt\in \{\vt_1, \vt_2\}$ and $c_i(\vx, \vt)$ are random variables i.i.d. sampled from $\text{Bernoulli}(r(\vx)\cdot o(\vt))$ for $1\leq i \leq N$. Assuming $r(\vx)$, $o(\vt_1)$ and $o(\vt_2)$ are non-zero, then:
\begin{align*}
    \E\left[\Omega \mid r'(\vx)
    \right]&=0,\\
    \mathbb{V}\left[
        \Omega\mid r'(\vx)
    \right]&=\frac{1}{NR}\left[
	\frac{1}{r(\vx)\cdot o(\vt_1)}+\frac{1}{r(\vx)\cdot o(\vt_2)}-2
\right],
\end{align*}
where $\Omega={\frac{o'(\vt_1)}{o(\vt_1)}-\frac{o'(\vt_2)}{o(\vt_2)}}$ and $R = \nicefrac{r'(\vx)^2}{r(\vx)^2}$. 
\end{proposition}
\begin{remark}
    The event $\Omega$ is closely related to the identifiability: As $N$ or $r(\vx)o(\vt)$ increases, the variance $\mathbb{V}$ of $\Omega$ decreases, leading to $\Omega\rightarrow0$ and $\nicefrac{o'(\vt)}{o(\vt)}$ approaching a constant, which indicates an identifiable ranking model according to Definition \ref{def:identifiable}. In practice, optimal feature and bias factors can be chosen to minimize the variance and reduce the required $N$.
\end{remark}

Based on Proposition \ref{thm:variance}, for two components $G_A=(V_A, E_A)$ and $G_B=(V_B, E_B)$, we use the following process to connect $G_A$ and $G_B$, by minimizing $\mathbb{V}$ to decreasing the necessary clicks and facilitate identifiability:
{\small
\begin{align}
    \label{eq:node_intervention_cost}
    &\mathcal C(\vx, \vt_1, \vt_2) = 
        \frac{1}{r(\vx)\cdot o(\vt_1)}+\frac{1}{r(\vx)\cdot o(\vt_2)}-2,\\
    \label{eq:node_intervention_21}
    &\vt_A^{(A)}, \vt_B^{(A)}, \vx_A = \underset{\vt_A\in V_A, \vt_B\in V_B, \vx\in X_{t_A}}{\arg\min} \mathcal C(\vx, \vt_A, \vt_B),\\
    \label{eq:node_intervention_12}
    &\vt_A^{(B)}, \vt_B^{(B)}, \vx_B = \underset{{\vt_A\in V_A, \vt_B\in V_B, \vx\in X_{t_B}}}{\arg\min}\mathcal C(\vx, \vt_A, \vt_B),\\
    \label{eq:node_intervention_d}
    &\vx^*, \vt^*=\begin{cases}
        \vx_A, \vt_B^{(A)} & \mbox{if $\mathcal C(\vx_A, \vt_A^{(A)}, \vt_B^{(A)})\leq \mathcal C(\vx_B, \vt_A^{(B)}, \vt_B^{(B)})$,}\\
        \vx_B, \vt_A^{(B)} & \mbox{otherwise,}\\
    \end{cases}
\end{align}
}
where $X_{t_A}=\{\vx_i\mid (\vx_i, \vt_A)\in\mathcal D\}$ and $X_{t_B}=\{\vx_i\mid (\vx_i, \vt_B)\in\mathcal D\}$. Here \Eqref{eq:node_intervention_cost} defines a cost\footnote{The value of $r$ and $o$ in \Eqref{eq:node_intervention_cost} can be based on rational estimations. For instance, $r$ can be chosen from a ranking model trained on biased clicks, and a manually crafted model can serve as $o$. Additionally, the node merging method, which we will soon introduce, may be utilized to generate initial estimates for $r$ and $o$.} of swapping $\vx$ from $\vt_1$ to $\vt_2$ (or from $\vt_2$ to $\vt_1$) based on $\mathbb V$ derived by Proposition \ref{thm:variance}. We ignore $R$ since it is a constant when the relevance model is identifiable. \Eqref{eq:node_intervention_21} defines the process that we find a feature $\vx_A$ related to a bias factor $\vt_A^{(A)}$ (belongs to $G_A$) and swap it to another bias factor $\vt_B^{(A)}$ (belongs to $G_B$). Reversely, \Eqref{eq:node_intervention_12} defines the process to swap the feature from $G_B$ to $G_A$. The final decision depends on the process with less cost (\Eqref{eq:node_intervention_d}). We refer to this cost as the \textit{intervention cost} between $G_A$ and $G_B$. Finally, we add $(\vx^*, \vt^*)$ to the dataset $\mathcal D$ and collect enough user clicks about it, which connects $G_A$ and $G_B$ together.

To extend the above process of connecting two components to the entire IG, consider a graph $G$ comprising $K$ connected components: $G = G_1 \cup \cdots \cup G_K$. We first construct another complete graph with $K$ nodes, each representing a component in $G$. Intervention costs between components $G_i$ and $G_j$ serve as edge weights in the constructed complete graph. A Minimum Spanning Tree (MST) algorithm is applied to this complete graph to find edges with minimal total weights for connection. Subsequent interventions (\Eqref{eq:node_intervention_cost} - \Eqref{eq:node_intervention_d}) are performed on these edges to connect the entire IG with the minimal total intervention cost. The comprehensive algorithm is detailed in Appendix \ref{sec:app_algo_ni}. 

Compared to traditional intervention strategies that often necessitate random swapping across all queries \citep{joachims2017unbiased,radlinski2006minimally,carterette2018offline,yue2010beyond,wang2018position}, node intervention requires only $K - 1$ swaps for an IG with $K$ connected components. Notably, $K$ is generally much smaller than the total query count, particularly when positions are the sole bias factors, which is the usual focus of traditional intervention strategies. Consequently, node intervention markedly reduces online interventions and improves the user experience.

\subsection{Node merging}
\label{sec:node_merging}

Despite node intervention being effective in achieving identifiability, it still requires additional online experiments, which can be time-consuming and may pose a risk of impeding user experience by displaying irrelevant documents at the top of the ranking list. What's worse, some types of bias factors may not be appropriate for swapping (\eg, contextual information or other documents' clicks). Therefore, we propose another simple and general methodology for addressing the unidentifiability issue, which involves merging nodes from different connected components and forcing them to have the same observation prediction. We refer to this strategy as \textbf{node merging}, which is illustrated in Figure \ref{fig:method_nm}.

Similar to node intervention, there are numerous options for selecting node pairs to merge. Note that merging two dissimilar nodes with distinct observation probabilities will inevitably introduce approximation errors, as stated in the following proposition (We defer the proof to Appendix \ref{sec:app_proof_4}):

\begin{proposition}[Error bound of merging two components]
\label{thm:merging_bound}
Suppose an IG $G=(V, E)$ consists of two connected components $G_1 = (V_1, E_1)$ and $G_2 = (V_2, E_2)$. If we merge two nodes $v_1\in G_1$ and $v_2\in G_2$ by forcing $o'(\vt') = o'(\vt'')$ where $v_1$ and $v_2$ represent bias factors $\vt'$ and $\vt''$, then:
\begin{align*}
    &r(\vx)\cdot o(\vt) = r'(\vx)\cdot o'(\vt)\\
    & \Longrightarrow  \left|\frac{r'(\vx_1)}{r(\vx_1)} - \frac{r'(\vx_2)}{r(\vx_2)}\right| \leq \left|\frac{o(\vt') - o(\vt'')}{o'(\vt')}\right|, \forall \vx_1, \vx_2\in\mathcal X,
\end{align*}
where we suppose $r, r', o$ and $o'$ are not zero.
\end{proposition}
\begin{remark}
    When $o(\vt') = o(\vt'')$, the relevance model is identifiable. If the gap between $o(\vt')$ and $o(\vt'')$ is large, $\nicefrac{r'(\vx)}{r(\vx)}$ will change greatly, hurting the performance.
\end{remark}

Therefore, we propose to merge similar nodes exhibiting minimal differences in their observation probabilities. Suppose each bias factor $\vt$ can be represented using a feature vector $\mX_{\vt}$ (namely bias features). We further assume that vectors with greater similarity correspond to closer observation probabilities. As a simple example, we can consider the number of positions as a 1-dimensional bias feature, since it is reasonable that documents in proximate positions will exhibit similar observation probabilities.

\begin{table*}[tbp]
    \centering
    \small
    \caption{Performance of different methods on the $K=2$ simulation dataset under PBM bias. We ran each experiment 10 times and reported the average results as well as the standard deviations.}
    \fontsize{8pt}{8pt}\selectfont
\begin{tabular}{lcccccc}
\toprule
    \textbf{Method} & \textbf{MCC}$\:\uparrow$ & \textbf{nDCG@1}$\:\uparrow$ & \textbf{nDCG@3}$\:\uparrow$ & \textbf{nDCG@5}$\:\uparrow$ & \textbf{nDCG@10}$\:\uparrow$ & \textbf{Click MSE} \\
\midrule
No debias & $0.521 _{\pm .000}$ & $0.711 _{\pm .000}$ & $0.625 _{\pm .000}$ & $0.665 _{\pm .000}$ & $0.820 _{\pm .000}$ & $2\times 10^{-5}$ \\
\arrayrulecolor{gray!80}
\midrule
\arrayrulecolor{black}
DLA & $0.707 _{\pm .105}$ & $0.836 _{\pm .061}$ & $0.742 _{\pm .091}$  & $0.789 _{\pm .070}$ & $0.886 _{\pm .040}$ & $<10^{-8}$ \\
\quad + Node intervention & $\mathbf{1.000 _{\pm .000}}$ & $\mathbf{1.000 _{\pm .000}}$ & $\mathbf{1.000 _{\pm .000}}$ & $\mathbf{1.000 _{\pm .000}}$ & $\mathbf{1.000 _{\pm .000}}$ & $<10^{-8}$ \\
\quad + Node merging & $0.975 _{\pm .000}$ & $\mathbf{1.000 _{\pm .000}}$ & $\mathbf{1.000 _{\pm .000}}$ & $\mathbf{1.000 _{\pm .000}}$ & $\mathbf{1.000 _{\pm .000}}$ & $<10^{-8}$ \\
\arrayrulecolor{gray!80}
\midrule
\arrayrulecolor{black}
Regression-EM & $0.580_{\pm .117}$  &$0.786_{\pm .035}$ & $0.677_{\pm .063}$ & $0.752_{\pm .044}$&  $0.857_{\pm .027}$ & $<10^{-8}$ \\
\quad + Node intervention & $\mathbf{0.980_{\pm .023}}$  & ${0.999_{\pm .001}}$  & $0.995_{\pm .010}$  & $0.989_{\pm .023}$   &$0.997_{\pm .006}$ & $<10^{-7}$ \\
\quad + Node merging & $0.975_{\pm .000}$&  $\mathbf{1.000_{\pm .000}}$ & $\mathbf{1.000_{\pm .000}}$&$\mathbf{1.000_{\pm .000}}$&  $\mathbf{1.000_{\pm .000}}$ & $<10^{-8}$ \\
\arrayrulecolor{gray!80}
\midrule
\arrayrulecolor{black}
Two-Tower & $0.830 _{\pm .050}$ & $0.883 _{\pm .034}$ & $0.832 _{\pm .054}$  & $0.857 _{\pm .045}$ & $0.925 _{\pm .022}$ & $<10^{-8}$ \\
\quad + Node intervention & $\mathbf{1.000 _{\pm .000}}$ & $\mathbf{1.000 _{\pm .000}}$ & $\mathbf{1.000 _{\pm .000}}$ & $\mathbf{1.000 _{\pm .000}}$ & $\mathbf{1.000 _{\pm .000}}$ & $<10^{-8}$ \\
\quad + Node merging & $0.975 _{\pm .000}$ & $\mathbf{1.000 _{\pm .000}}$ & $\mathbf{1.000 _{\pm .000}}$ & $\mathbf{1.000 _{\pm .000}}$ & $\mathbf{1.000 _{\pm .000}}$ & $<10^{-8}$ \\
\bottomrule
\end{tabular}%

  \label{tab:experiment_simulation_k2}%
\end{table*}%

Based on it, we use the following process to connect two components $G_A=(V_A, E_A)$ and $G_B=(V_B, E_B)$:
\begin{align}
    \label{eq:node_merge_cost}
    \mathcal C(\vt_1, \vt_2) &= ||\mX_{\vt_1} - \mX_{\vt_2}||,\\
    \label{eq:node_merge_t12}
    \vt_A^*, \vt_B^* &= {\arg\min}_{t_A\in V_A, t_B \in V_B} 
    \mathcal C(\vt_A, \vt_B).
\end{align}
Here \Eqref{eq:node_merge_cost} defines the merging cost to merge $\vt_1$ and $\vt_2$. \Eqref{eq:node_merge_t12} finds two bias factors $\vt_A^*$ and $\vt_B^*$ from two components that have the minimal merging cost. We refer to this cost as the merging cost between $G_A$ and $G_B$. Analogous to node intervention, the MST algorithm is employed to connect the IG, with the sole distinction being the definition of edge weights in the complete graph as merging costs. The comprehensive algorithm is detailed in Appendix \ref{sec:app_algo_nm}. 

Furthermore, we present two key properties of node merging in Appendix \ref{sec:error_bound_merging}: (1) \textit{Consistency}: the merging constraints imposed on $o'$ are compatible with the preconditions for identifiability (\ie, $r(\vx)\cdot o(\vt)=r'(\vx)\cdot o'(\vt)$), ensuring that click probabilities can still be accurately fitted after applying node merging; and (2) \textit{Error bound}: extended from Proposition \ref{thm:merging_bound}, the error for node merging is bounded by the diameter of the constructed MST.

Compared to node intervention, node merging performs on the offline dataset, making it a simple and time-efficient approach. However, merging bias factors brings additional approximation error which has the potential to adversely impact the ranking performance. 
\section{Experiments}

In this section, we describe our experimental setup and show the empirical results, in both the fully synthetic setting and large-scale study. \footnote{Code is available at \url{https://github.com/Keytoyze/ULTR-identifiability}}

\subsection{Fully synthetic study}

\paragraph{Dataset} To verify the correctness of Theorem \ref{thm:graph}, and the effectiveness of proposed methods, we first conducted experiments on a fully synthetic dataset, which allowed for precise control of the connectivity of IGs. We generated four datasets with different numbers $K$ of connected components within each IG ($K=1,2,3,4$), as illustrated in Appendix \ref{sec:app_datasets}. The bias factors only consist of positions (\ie, position-based model or PBM). We defer the click simulation setup to Appendix \ref{sec:app_click_simulation}.

\paragraph{Baselines} For comparative analysis, we evaluated our methods on several baselines: \textit{No debias}, which trains the ranking model using click data without an observation model, and three widely-used ULTR optimization algorithms based on examination hypothesis, \textit{DLA} \citep{ai2018unbiased}, \textit{Regression-EM} \citep{wang2018position} and \textit{Two-Tower} \citep{guo2019pal}. Notably, many recent models \citep{vardasbi2020cascade,chen2021adapting,sarvi2023impact,cacm,chen2022lbd,chen2023multi} are variants of these three ULTR algorithms, primarily varying in their bias factor handling. We will discuss the influence of bias factors on identifiability in the next section (\sref{sec:experiment_large}). Training details for the baselines can be found in Appendix \ref{sec:app_training}.  Our \textit{node intervention} and \textit{node merging}, being model-agnostic and training-independent, are applied to datasets before training.

\paragraph{Evaluation metrics} To evaluate the performance
of the methods, we computed the mean correlation coefficient (\textbf{MCC}) between the true relevance probability $r(\cdot)$ and the predicted relevance probability $r'(\cdot)$, defined as
\begin{align*}
    \frac{
        \sum_{i=1}^{|\mathcal D|} \left(
            r(\vx_i)-\overline{r(\vx)}
        \right)\left(
            r'(\vx_i)-\overline{r'(\vx)}
        \right)
    }{
        \sqrt{
            \sum_{i=1}^{|\mathcal D|} \left(
                r(\vx_i)-\overline{r(\vx)}
            \right)^2
        }
        \sqrt{
            \sum_{i=1}^{|\mathcal D|} \left(
                r'(\vx_i)-\overline{r'(\vx)}
            \right)^2
        }
    }.
\end{align*}

A high MCC means that we successfully identified the true model and recovered the true relevance up to a scaling transformation. We also computed \textbf{nDCG}, which are standard ranking metrics prevalently used in LTR, and \textbf{Click MSE}, which is the mean squared error between the true and predicted click probability for evaluating the fitting goodness.

\paragraph{Analysis: How does the connectivity of IGs impact the ranking performance?} Figure \ref{fig:dataset_components} shows the effects of varying numbers of connected components $K$ within IGs on ranking performance (using DLA), with different numbers of clicks. Here, $K=1$ indicates a connected IG. We can observe that the ranking model is capable of achieving perfect ranking accuracy only if the IG is connected. Otherwise, the performance exhibits significant instability and poor quality, regardless of the number of clicks collected. 
Besides, larger $K$s (\eg, when $K>1$) do not give rise to significant differences. 
These observations serve to validate the correctness of Theorem \ref{thm:graph}.

\begin{figure}[tbp]
    \centering%
    \subfigure[DLA]{
        \includegraphics[width=0.23\textwidth,trim={10 10 10 10},clip]{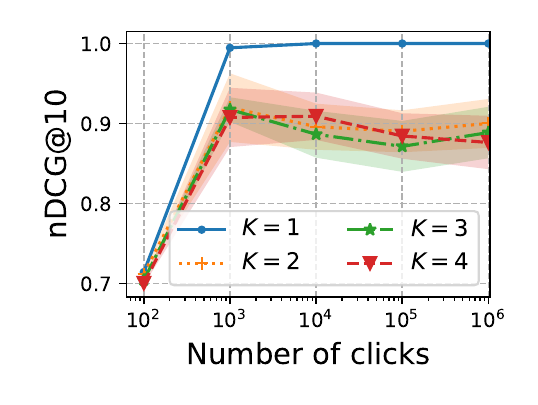}
        \label{fig:dataset_components}
    }%
    \subfigure[Different methods]{
        \includegraphics[width=0.23\textwidth,trim={10 3 7 20},clip]{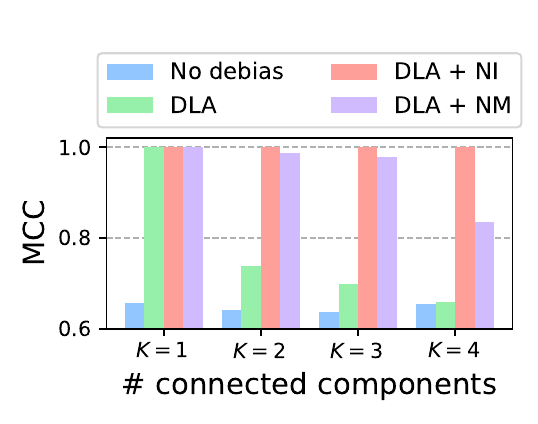}
        \label{fig:dataset_components_method}
    }
    \caption{(a) Performance of DLA on different numbers of connected components $K$ across different click counts. Shadowed areas depict variance. (b) Performance of different methods across different $K$. NI = \underline{N}ode \underline{I}ntervention. NM = \underline{N}ode \underline{M}erging. }
\end{figure}
\begin{figure*}[tbp]
    \centering
        \subfigure[Node intervention]{
        \includegraphics[width=0.24\textwidth,trim={10 10 10 10},clip]{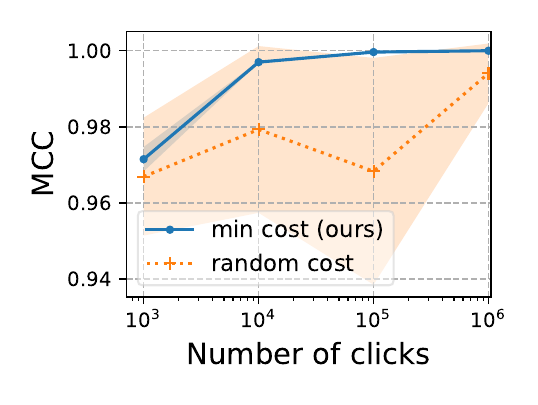}
        \label{fig:simulation_ni}
    }%
    \subfigure[Node merging]{
    \includegraphics[width=0.24\textwidth,trim={10 10 10 10},clip]{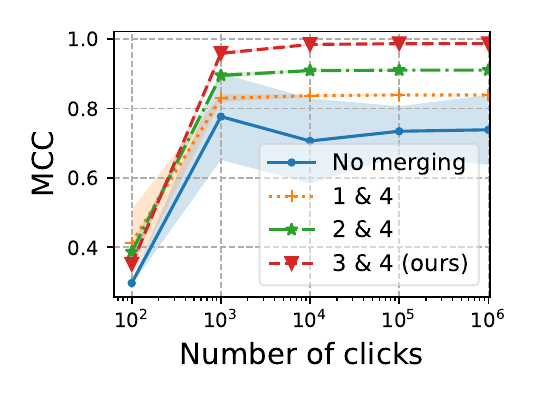}
        \label{fig:simulation_nm}
    }%
    \subfigure[Yahoo!]{
    \includegraphics[width=0.24\textwidth,trim={10 10 10 10},clip]{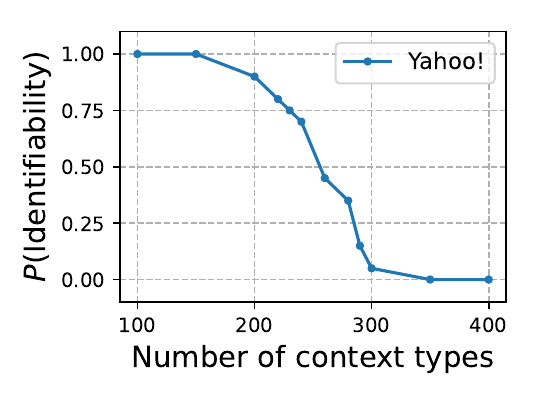}
        \label{fig:sample_context_yahoo}
    }%
    \subfigure[Istella-S]{
    \includegraphics[width=0.24\textwidth,trim={10 10 10 10},clip]{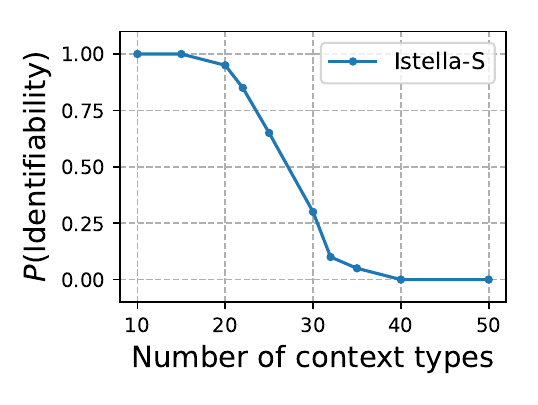}
        \label{fig:sample_context_istella}
    }
    \caption{(a) Impact of cost selection strategies in node intervention. (b) Impact of merging strategies in node merging. (c)(d) Impact of the number of context types on the identifiability probabilities on the two datasets.}
\end{figure*}

\paragraph{Analysis: Can our proposed two methods handle unidentifiable datasets?} We tested the different methods in the $K=2$ scenario and summarized the results in Table \ref{tab:experiment_simulation_k2}. Our methods consistently achieved nearly perfect ranking accuracy, markedly surpassing the baselines. Notably, various ULTR algorithms tend to yield disparate suboptimal performances in unidentifiable cases, despite perfectly fitting click probability as indicated by Click MSE. However, with identifiability established through our approaches, they all converge to a common set of effective parameters that accurately recover relevance. Furthermore, node intervention recovers a more accurate relevance than node merging, evidenced by the MCC. This shows the trade-offs between the two proposed methods: compared to node merging, node intervention does not introduce approximation errors, but requires an additional step to augment the dataset.

\paragraph{Analysis: How does the methods' performance degrade with decreased IG connectivity?}  We evaluated DLA and our approaches under varying the value of $K$. Figure \ref{fig:dataset_components_method} illustrates that both DLA and node merging show performance declines as $K$ increases, while node merging exhibits a slower rate. Remarkably, node intervention sustains a perfect relevance recovery ability. We also noted the node merging performance for $K=4$ is suboptimal compared to $K=3$ and $K=2$, attributed to merging two relatively divergent nodes in $K=4$. A comparative analysis of observation estimation in Appendix \ref{sec:observation} provides an in-depth discussion of this suboptimal performance.

\paragraph{Ablation study: Impact of different selection strategies for node intervention.} In our ablation study of the node intervention method within the $K=2$ scenario (using DLA), we explored a variation termed \textit{random cost}. Here, the cost function (\Eqref{eq:node_intervention_cost}) was modeled to follow a uniform distribution over $[0, 1]$ and unrelated to $\vx$ and $\vt$, which leads to random selection of intervention pairs (\Eqref{eq:node_intervention_21}-\Eqref{eq:node_intervention_d}). The original method is denoted as \textit{min cost}.  As observed in Figure \ref{fig:simulation_ni}, the \textit{random cost} approach exhibits notably higher variance than \textit{min cost} and requires sufficient clicks to obtain a stable performance, which confirms the validity of Proposition \ref{thm:variance}. Additionally, despite its variance, \textit{random cost} attains a commendable performance level relative to the unidentifiable baseline, reinforcing the significance of maintaining a connected IG.

\paragraph{Ablation study: Impact of different merging strategies for node merging.} Similarly, we conducted an ablation study for the node merging method, based on DLA within the $K=2$ scenario. We use three different merging strategies, where $a$ \& $b$ represents merging nodes corresponding to the position $a$ and $b$: 1 \& 4, 2 \& 4, and 3 \& 4. Note that all of the strategies ensure a connected IG (See Figure \ref{fig:experiment_simulation_ig} in Appendix \ref{sec:app_datasets} for details), and 3 \& 4 is the proposed node merging strategy. Figure \ref{fig:simulation_nm} demonstrates that performance improves as the merging nodes are closer, thereby validating Proposition \ref{thm:merging_bound}.

\subsection{Large-scale study}

\label{sec:experiment_large}



\begin{table*}[t]
  \centering
  \small
  \caption{Performance (with the standard deviations) comparison on two datasets under CPBM bias.}
  \fontsize{8pt}{8pt}\selectfont

\begin{tabular}{clccccccc}
\toprule
\multirow{2}[3]{*}{\textbf{Dataset}} & \multicolumn{1}{c}{\multirow{2}[3]{*}{\textbf{Method}}} & \multicolumn{4}{c}{\textbf{Training}} &      & \multicolumn{2}{c}{\textbf{Test}} \\
\arrayrulecolor{gray!80}\cmidrule{3-6}\cmidrule{8-9}\arrayrulecolor{black}     &      & \textbf{MCC}$\:\uparrow$  & \textbf{nDCG@5}$\:\uparrow$ & \textbf{nDCG@10}$\:\uparrow$ & \textbf{Click MSE} &      & \textbf{nDCG@5}$\:\uparrow$ & \textbf{nDCG@10}$\:\uparrow$ \\
\midrule
\multirow{3}[2]{*}{Yahoo} & No debias & $0.765_{\pm.000}$ & $0.841_{\pm.000}$ & $0.915_{\pm.000}$ & $4\times 10^{-4}$ &      & $0.693_{\pm.002}$ & $0.741_{\pm.001}$ \\
     & DLA  & $0.750_{\pm.000}$ & $0.844_{\pm.000}$ & $0.914_{\pm.000}$ & $2\times 10^{-5}$ &      & $0.693_{\pm.001}$ & $0.741_{\pm.001}$ \\
     &  DLA + Node merging & \boldmath{}\textbf{$0.771_{\pm.000}$}\unboldmath{} & \boldmath{}\textbf{$0.853_{\pm.000}$}\unboldmath{} & \boldmath{}\textbf{$0.920_{\pm.000}$}\unboldmath{} & $4\times 10^{-5}$ &      & \boldmath{}\textbf{$0.697_{\pm.001}$}\unboldmath{} & \boldmath{}\textbf{$0.745_{\pm.001}$}\unboldmath{} \\
 \arrayrulecolor{gray!80}
\midrule
\arrayrulecolor{black}
\multirow{3}[2]{*}{Istella-S} & No debias & $0.764_{\pm.000}$ & $0.885_{\pm.000}$ & $0.941_{\pm.000}$ & $4\times 10^{-5}$ &      & $0.634_{\pm.001}$ & $0.682_{\pm.001}$ \\
     & DLA  & $0.764_{\pm.000}$ & $0.886_{\pm.000}$ & $0.941_{\pm.000}$ & $1\times 10^{-6}$ &      & $0.633_{\pm.001}$ & $0.682_{\pm.001}$ \\
     &  DLA + Node merging & \boldmath{}\textbf{$0.772_{\pm.000}$}\unboldmath{} & \boldmath{}\textbf{$0.892_{\pm.000}$}\unboldmath{} & \boldmath{}\textbf{$0.944_{\pm.000}$}\unboldmath{} & $2\times 10^{-5}$ &      & \boldmath{}\textbf{$0.636_{\pm.001}$}\unboldmath{} & \boldmath{}\textbf{$0.684_{\pm.001}$}\unboldmath{} \\
\bottomrule
\end{tabular}%

  \label{tab:experiment_real}%
\end{table*}%

\paragraph{Dataset} We also performed another empirical study on the large-scale semi-synthetic setup that is prevalent in unbiased learning to rank \citep{joachims2017unbiased,ai2021unbiased,chen2022lbd} on two widely used benchmark datasets: Yahoo! LETOR \citep{chapelle2011yahoo}  and Istella-S \citep{lucchese2016post}.  We provide further details for them in Appendix \ref{sec:app_datasets}. In both datasets, only the top 10 documents were considered to be
displayed. In addition to positions, we also incorporated context types as another bias factor that is prevalent in recent research (\ie, contextual position-based model or CPBM)  \citep{fang2019intervention}. A random context type was assigned to each query-document pair. Furthermore, we conducted identifiability testing on the TianGong-ST \citep{chen2019tian}, a large-scale real-world dataset with an abundance of genuine bias factors.

\paragraph{Analysis: Does the unidentifiability issue exist in the real world?} We first applied the identifiability check algorithm on TianGong-ST and found that when accounting for \textbf{positions} and all provided \textbf{vertical types} as bias factors (a total of 20,704), the IG of this dataset is \textit{disconnected}: there are 2,900 connected components within it. This observation suggests that the unidentifiability phenomenon could occur in the real world, even on a large-scale dataset. We further excluded certain bias factors from that dataset and assessed its identifiability, which is elaborated in Appendix \ref{sec:app_tiangong}.

\paragraph{Analysis: How do the number of bias factors $|\mathcal T|$, dataset scale $|\mathcal D|$ and feature count $|\mathcal X|$ affect the identifiability?} We tuned the number of context types within Yahoo! and Istella-S and computed the frequency with which the IG was connected to determine the identifiability probability. From Figure \ref{fig:sample_context_yahoo} and \ref{fig:sample_context_istella}, it can be observed that if positions are the only bias factors, both datasets are identifiable for the ranking model. However, upon the consideration of context types, the identifiability probability drastically decreases as the number of context types increases. Merely 20 (on Istella-S) or 200 (on Yahoo!) are sufficient to render the IG disconnected. When the number is further increased, it becomes exceedingly challenging to obtain an identifiable ranking model, which aligns with the conclusion of Example \ref{thm:identifiability_prob}. We also experimented with investigating the impact of dataset scale and feature count on identifiability, elaborated in Appendix \ref{sec:app_sampling_ratio}.

\paragraph{Analysis: How does node merging perform under contextual bias?} We simulated 5,000 context types on Yahoo! and Istella-S to evaluate the efficacy of node merging. To eliminate the influence of initialization, all predicted observation probabilities are initially set to 1.0. In such cases, node intervention is not practical since context types cannot be swapped. Results in both the training and test sets in Table \ref{tab:experiment_real} show that node merging correctly manages unidentifiable cases and restores relevance, significantly surpassing the baselines in MCC and nDCG. 


\paragraph{Analysis: How does model initialization impact performance?} In unidentifiable scenarios, models can converge to various parameters that predict click probability correctly, but only a limited range truly represents correct relevance. Therefore, model initialization notably affects performance in unidentifiable scenarios. We detail it in Appendix \ref{app:initialization}.


\section{Related work}

\label{sec:related-work}

\paragraph{Unbiased learning to rank (ULTR)}  Unbiased learning to rank (ULTR) tries to directly learn unbiased ranking models from biased clicks. The core of ULTR lies in the estimation of observation probabilities, which is typically achieved through intervention \citep{wang2016learning,joachims2017unbiased}. These methods are related to the node intervention we proposed in \sref{sec:node_intervention}, but they are prone to useless swapping operations and negatively impact the user's experience. To avoid intervention, \citet{agarwal2019estimating} and \citet{fang2019intervention} proposed intervention harvest methods that exploit click logs with multiple ranking models. These methods are related to Theorem \ref{thm:graph}, but they did not delve into the identifiability conditions. Our proposed theory bridges the gap between these two groups of work by determining when intervention is necessary or not. Recently, some researchers proposed to jointly estimate relevance and bias, including IPS-based methods \citep{wang2018position,ai2018unbiased,hu2019unbiased,jin2020deep} and two-tower based models \citep{zhao2019recommending,guo2019pal,haldar2020improving,yan2022revisiting}. These models are based on the examination hypothesis and are optimized to maximize user click likelihood, therefore our proposed framework can be applied to these models as well.

On the other side, researchers developed models to extend the scope of bias factors that affect observation probabilities, which contain position \citep{wang2018position,ai2018unbiased,hu2019unbiased,ai2021unbiased}, contextual information \citep{fang2019intervention,tian2020counterfactual}, clicks in the same query list \citep{vardasbi2020cascade,chen2021adapting}, presentation style \citep{zheng2019constructing,liu2015influence,chen2023multi}, search intent \citep{sun2020eliminating}, result domain \citep{ieong2012domain}, ranking features \citep{chen2022lbd} and outliers \citep{sarvi2023impact}. While incorporating additional bias factors is beneficial for improving the estimation of accurate observation probabilities \citep{chen2023multi}, as we mention in Example \ref{thm:identifiability_prob} and \sref{sec:experiment_large}, an excessive number of bias factors may pose a risk of unidentifiability.

\paragraph{Relevance recovery in ULTR} The identifiability condition (Theorem \ref{thm:graph}) establishes connections and generalizations to recent ULTR research. \citet{agarwal2019estimating} constructed intervention sets to uncover documents that are put at two different positions to estimate observation probabilities, which, however, did not further explore the recoverable conditions. \citet{oosterhuis2022reaching} also showed that a perfect click model can provide incorrect relevance estimates and the estimation consistency depends on the data collection procedure. We take a further step by delving into the root cause and digesting the concrete condition based on the data. \citet{zhang2023towards} found that some features (\eg, with high relevance) are more likely to be assigned with specific bias factors (\eg, top positions). This phenomenon results in a decline in performance, called confounding bias. This bias is related to the identifiability issue since when a severe confounding bias is present, the IG is more likely to be disconnected due to insufficient data coverage.

\paragraph{Identifiability} Identifiability is a fundamental concept in various machine learning fields, such as independent component analysis \citep{hyvarinen2017nonlinear,hyvarinen2019nonlinear}, latent variable models \citep{allman2009identifiability,guillaume2019introductory,khemakhem2020variational}, missing not at random data \citep{ma2021identifiable,miao2016identifiability} and causal discovery \citep{addanki2021intervention,peters2011identifiability,spirtes2016causal}. It defines a model's capacity to recover some unique latent structure, variables, or causal relationships from the data. In this work, we embrace the commonly used notion of identifiability and apply its definition to the ULTR domain.
\section{Conclusions}

In this paper, we take the first step to exploring if and when relevance can be recovered from click data from a foundational perspective. We first define the identifiability of a ranking model, which refers to the ability to recover relevance probabilities up to a scaling transformation. Our research reveals that (1) the ranking model is not always identifiable, which depends on the underlying structure of the dataset (\ie, an identifiability graph should be connected); (2) identifiability depends on the data size, the number of bias factors and features, and unidentifiability issue is possible on large-scale real-world datasets; (3) two methods, node intervention and node merging, can be utilized to address the unidentifiability issues. Our proposed framework is theoretically and empirically verified.

\paragraph{Limitations and future work} (1) While examination hypothesis (\Eqref{eq:examination_hypothesis}) is the most widely used hypothesis, other models like trust bias \citep{agarwal2019addressing} and vector-based examination hypothesis \citep{chen2022scalar,yan2022revisiting} also merit attention. Adapting our graph-based approach to them is a promising area for future research. (2) While the assumption of perfect data fitting is typical for identifiability theory in machine learning \cite{hyvarinen2019nonlinear,khemakhem2020variational}, investigating the propagation of approximation errors within the identifiability graph is an interesting future direction in ULTR.

\section*{Impact statement}
This paper presents work whose goal is to advance the field of Machine Learning. There are many potential societal consequences of our work, none of which we feel must be specifically highlighted here.

\section*{Acknowledgments}
Research work mentioned in this paper is supported by State Street Zhejiang University Technology Center. We would also like to thank Lefei Shen for the assistance in the experiments and thank Yusu Hong, Yu Mou, Shanda Li, and Wencheng Cai for the valuable discussion of the theory.



\bibliography{ref}
\bibliographystyle{icml2024}

\newpage
\appendix
\onecolumn
\numberwithin{equation}{section}

\section*{Appendix}

\section{Theoretical results}

\label{sec:app_proof}

\subsection{Proof for Theorem \ref{thm:graph}}

\label{sec:app_proof_1}

\textbf{Step 1.} We first prove the "if" part: assume that 
\begin{align}
    \label{eq:thm1_click_equal}
    r(\vx)\cdot o(\vt)=r'(\vx)\cdot o'(\vt)\quad  \forall (\vx,\vt)\in \mathcal D,
\end{align}
and $G$ is connected, our goal is to prove that $\nicefrac{r(\vx)}{r'(\vx)}=\text{constant}$. Note that we only consider the nontrivial case $r(\vx)\neq 0$ and $r'(\vx) \neq 0$. Otherwise, $C$ can be any positive number. 

For any two bias factors $\vs\in\mathcal T$ and $\vt \in \mathcal T$, since $G$ is connected, there exists a path $v_{\va_1} \rightarrow v_{\va_2} \rightarrow \cdots \rightarrow v_{\va_n}$ in $G$ where $v_{\va_1},\cdots,v_{\va_n}$ are the nodes in the identifiability graph representing different bias factors, and $\va_1=\vs, \va_n=\vt$. Consider a middle edge $v_{\va_m}\rightarrow v_{\va_{m+1}} (1\leq m\leq n-1)$, according to the definition of the edge,
\begin{align}
\label{eq:thm1_edge}
\exists \vx\in \mathcal X, \text{ s.t. } (\vx, \va_m)\in \mathcal D \land (\vx,\va_{m+1})\in \mathcal D.
\end{align}

According to \Eqref{eq:thm1_click_equal} and \Eqref{eq:thm1_edge}, we have $r(\vx)\cdot o(\va_m)=r'(\vx)\cdot o'(\va_m)$ and $r(\vx)\cdot o(\va_{m+1})=r'(\vx)\cdot o'(\va_{m+1})$, and therefore 
\begin{align}
\label{eq:thm1_edge_equal}
\frac{o'(\va_m)}{o(\va_m)}=\frac{r(\vx)}{r'(\vx)}=\frac{o'(\va_{m+1})}{o(\va_{m+1})}.
\end{align}

Let $f(\vx)=\nicefrac{r(\vx)}{r'(\vx)}$ and $g(\vt)=\nicefrac{o'(\vt)}{o(\vt)}$. Applying \Eqref{eq:thm1_edge_equal} to the path $v_{\va_1} \rightarrow v_{\va_2}\rightarrow \cdots \rightarrow v_{a_n}$, we obtain $g(\vs)=g(\vt)$. Given that $\vs$ and $\vt$ are selected arbitrarily, we have $g(\vt)=\text{constant}$ for all bias factors $\vt$. Since $f(\vx)=g(\vt)$ holds for all $(\vx, \vt)\in\mathcal D$ according to \Eqref{eq:thm1_click_equal}, $f(\vx)$ is also constant.

\textbf{Step 2.} We then prove the "only if" part: assume that the relevance is identifiable, and prove that $G$ is connected. We prove this by contradiction: Given a disconnected IG $G$, our goal is to prove that the ranking model is unidentifiable, by showing that we can construct two click models such that the click probabilities are equal, yet the inside relevance models differ.

Since $G=(V, E)$ is disconnected, we suppose $G$ can be divided into two disjoint graphs $G_1=(V_1, E_1)$ and $G_2=(V_2, E_2)$. Note that we do not require $G_1$ or $G_2$ to be connected, therefore this division is always feasible even when $G$ has more than two components. Based on $G_1$ and $G_2$, we can divide the dataset $\mathcal{D}$ into two disjoint sets $D_1$ and $D_2$: $\mathcal{D}_1=\{(\vx, \vt) \mid v_{\vt}\in V_1\}$ and $\mathcal{D}_2=\{(\vx, \vt) \mid v_{\vt}\in V_2\}$. Let $\mathcal X_1=\{\vx\mid(\vx, \vt)\in\mathcal{D}_1\}$ denote features in $\mathcal{D}_1$, and $\mathcal X_2=\{\vx\mid(\vx, \vt)\in\mathcal{D}_2\}$ denote features in $\mathcal{D}_2$. Note that $\mathcal X_1$ and $\mathcal X_2$ are disjoint, \ie, $\mathcal X_1 \cap \mathcal X_2=\varnothing$, otherwise according to the definition of the edge set, there exists an edge between $V_1$ and $V_2$ which connects $G_1$ and $G_2$.

Next, given any relevance function $r$ and observation function $o$, we define $r'$ and $o'$ as follows.
\begin{align*}
    r'(\vx) &= \begin{cases}
        \alpha r(\vx) & \mbox{if $\vx\in\mathcal X_1$,}\\
        \beta r(\vx) & \mbox{if $\vx\in\mathcal X_2$,}
    \end{cases} \\
    o'(\vt) &= \begin{cases}
        \nicefrac{o(\vt)}{\alpha} & \mbox{if $v_{\vt}\in V_1$,}\\
        \nicefrac{o(\vt)}{\beta} & \mbox{if $v_{\vt}\in V_2$,}
    \end{cases}
\end{align*}
where $\alpha\neq \beta$ are two positive numbers. Note that if $(\vx, \vt)\in\mathcal{D}_1$, then $\vx\in\mathcal{X}_1$ and $v_{\vt}\in V_1$, therefore $r'(\vx)o'(\vt)=\alpha r(\vx)\cdot \nicefrac{o(\vt)}{\alpha} = r(\vx)o(\vt)$. If $(\vx, \vt)\in\mathcal{D}_2$, then $\vx\in\mathcal{X}_2$ and $v_{\vt}\in V_2$, therefore $r'(\vx)o'(\vt)=\beta r(\vx)\cdot \nicefrac{o(\vt)}{\beta } = r(\vx)o(\vt)$. Based on it, \Eqref{eq:thm1_click_equal} holds for all $(\vx, \vt)\in\mathcal D$. However, it is obvious that $C$ isn't constant in $r(\vx)=C r'(\vx)$, since $C=\alpha$ when $\vx\in\mathcal X_1$ and $C=\beta$ otherwise. It indicates that the relevance model isn't identifiable.



\subsection{Proof for Example \ref{thm:identifiability_prob}}

\label{sec:app_proof_2}

We begin by estimating the disconnected probability between two nodes in the identifiability graph, as the following lemma.

\begin{lemma}
\label{lem:link_prob}
In an identifiability graph, the probability of two nodes $v_{\vs}$ and $v_{\vt}$ are disconnected can be estimated as:
\begin{align*}
    P(\text{disconnected}\mid \mathcal D, \vs, \vt) \sim \exp\left(
        -\frac{|\mathcal D|}{|\mathcal {T}|}
    \right)\left[
        2 - \exp\left(
            -\frac{|\mathcal D|}{|\mathcal X||\mathcal {T}|}
        \right)
    \right]^{|\mathcal X|},
\end{align*}
when $|\mathcal X| |\mathcal T| \rightarrow \infty$.
\end{lemma}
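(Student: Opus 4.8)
The plan is to interpret ``disconnected'' in the lemma as the absence of a direct edge between $v_{\vs}$ and $v_{\vt}$, and to reduce the event to a balls-into-bins occupancy problem. By the edge definition in Theorem \ref{thm:graph}, the two nodes fail to be joined by an edge exactly when no feature $\vx$ co-occurs with both $\vs$ and $\vt$, i.e. when for every $\vx\in\mathcal X$ at least one of the pairs $(\vx,\vs)$, $(\vx,\vt)$ is absent from $\mathcal D$. Under the stated generative model the $|\mathcal D|$ records are i.i.d. uniform draws from the grid of $|\mathcal X||\mathcal T|$ cells $(\vx',\vt')$, so I would first record that the probability a fixed cell is never drawn is $(1-\nicefrac{1}{|\mathcal X||\mathcal T|})^{|\mathcal D|}$, which I abbreviate by $p_0$ and approximate by $\exp(-\nicefrac{|\mathcal D|}{|\mathcal X||\mathcal T|})$.

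First I would handle a single feature $\vx$. The relevant event is that the two cells $(\vx,\vs)$ and $(\vx,\vt)$ are not both occupied. Treating the occupancy indicators of distinct cells as independent, each empty with probability $p_0$, the complementary event (both occupied) has probability $(1-p_0)^2$, so
\begin{align*}
    P(\text{no edge via } \vx) = 1 - (1-p_0)^2 = p_0(2-p_0).
\end{align*}
Next I would multiply across the $|\mathcal X|$ features, again invoking independence of the cells associated with different features, to obtain $P(\text{disconnected}) = [p_0(2-p_0)]^{|\mathcal X|}$. Finally, splitting the product and using $p_0^{|\mathcal X|}=\exp(-\nicefrac{|\mathcal D|}{|\mathcal T|})$ together with the approximation for $p_0$ inside the remaining factor gives exactly the claimed expression.

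The crux --- and the only place the hypothesis $|\mathcal X||\mathcal T|\to\infty$ is really needed --- is the independence approximation. The cell occupancies are in truth negatively correlated, because the total number of draws is fixed at $|\mathcal D|$ and every draw lands in exactly one of the $|\mathcal X||\mathcal T|$ cells. The clean way to justify treating them as independent Bernoulli variables (equivalently, the cell counts as independent Poisson with mean $\nicefrac{|\mathcal D|}{|\mathcal X||\mathcal T|}$) is Poissonization: as the number of cells grows, the multinomial joint law over cell occupancies converges to the product law, so the per-feature factorization of $P(\text{no edge via }\vx)$ and the cross-feature factorization both become asymptotically exact. I expect this decorrelation argument to be the main technical obstacle, since both factorizations silently depend on it; everything after that is the elementary algebra $1-(1-p_0)^2=p_0(2-p_0)$ and the exponent bookkeeping $p_0^{|\mathcal X|}=\exp(-\nicefrac{|\mathcal D|}{|\mathcal T|})$.
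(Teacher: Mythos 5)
Your proposal matches the paper's own proof essentially step for step: the same reading of ``disconnected'' as the absence of a direct edge, the same per-feature factorization $1-(1-p_0)^2$ with $p_0=\left(1-\nicefrac{1}{|\mathcal X||\mathcal T|}\right)^{|\mathcal D|}\approx\exp\left(-\nicefrac{|\mathcal D|}{|\mathcal X||\mathcal T|}\right)$, the same product over the $|\mathcal X|$ features, and the same exponent bookkeeping $p_0^{|\mathcal X|}=\exp\left(-\nicefrac{|\mathcal D|}{|\mathcal T|}\right)$. If anything you are slightly more careful: the paper writes both independence factorizations as exact equalities, whereas your Poissonization remark correctly flags them as approximations to the (negatively correlated) multinomial occupancies, justified in the limit $|\mathcal X||\mathcal T|\to\infty$ stated in the lemma.
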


\begin{proof}

Let $P(\vx, \vs)$ and $P(\vx, \vt)$ denote the probabilities of selecting $(\vx, \vs)$ and $(\vx, \vt)$ respectively. We have:

\begin{align*}
P(\text{disconnected}\mid \mathcal D, \vs, \vt) &= P\left(
\bigcap_{\vx\in \mathcal X} (\vx, \vs)\notin \mathcal D \lor (\vx, \vt)\notin \mathcal D
\right)\\
&= \prod_{\vx\in\mathcal X} P\left(
    (\vx, \vs)\notin \mathcal D \lor (\vx, \vt)\notin \mathcal D
\right) \\
&= \prod_{\vx\in\mathcal X} 1 - P\left(
    (\vx, \vs)\in \mathcal D
\right)\cdot P\left(
    (\vx, \vt)\in \mathcal D
\right) \\
&= \prod_{\vx\in\mathcal X} 1 - \left(
    1 - P\left( (\vx, \vs) \notin \mathcal{D} \right)
\right)\cdot \left(
    1 - P\left( (\vx, \vt) \notin \mathcal{D} \right)
\right).
\end{align*}

Note that $P\left( (\vx, \vt) \notin \mathcal{D} \right)$ is the probability that $(\vx, \vt)$ is not sampled for $|\mathcal D|$ times, we have $P\left( (\vx, \vs) \notin \mathcal{D} \right) = \left[1 - P(\vx, \vs)\right]^{|\mathcal D|}$ and $P\left( (\vx, \vt) \notin \mathcal{D} \right) = \left[1 - P(\vx, \vt)\right]^{|\mathcal D|}$, therefore,
\begin{align*}
    P(\text{disconnected}\mid \mathcal D, \vs, \vt) = \prod_{\vx\in\mathcal X} 1 - \left(
        1 - \left[
            1 - P(\vx, \vs)
        \right]^{|\mathcal D|}
    \right) \left(
        1 - \left[
            1 - P(\vx, \vt)
        \right]^{|\mathcal D|}
    \right).
\end{align*}


Using the condition that features and bias factors are sampled independently and uniformly, we have $P(\vx, \vs) = P(\vx, \vt) = \nicefrac{1}{|\mathcal X||\mathcal T|}$. Therefore,
\begin{align*}
P(\text{disconnected}\mid \mathcal D, \vs, \vt) 
&= 
\left\{ 1 - \left[
    1 - \left(
        1 - \frac{1}{|\mathcal X||\mathcal T|}
    \right)^{|\mathcal D|}
\right]^2\right\}^{|\mathcal X|}\\
&= \left\{ 1 - \left[
    1 - \left(
        1 - \frac{1}{|\mathcal X||\mathcal T|}
    \right)^{-|\mathcal X||\mathcal T| \cdot \frac{-|\mathcal D|}{|\mathcal X||\mathcal T|}}
\right]^2\right\}^{|\mathcal X|}\\
&\sim \left\{ 1 - \left[
    1 - \exp\left(
        -\frac{|\mathcal D|}{|\mathcal {X}||\mathcal {T}|}
    \right)
\right]^2\right\}^{|\mathcal X|}\\
&= \left[ 2 \exp\left(
    -\frac{|\mathcal D|}{|\mathcal {X}||\mathcal {T}|}
\right) - \exp\left(
    -\frac{2|\mathcal D|}{|\mathcal {X}||\mathcal {T}|}
\right)\right]^{|\mathcal X|}\\
&= \exp\left(
    -\frac{|\mathcal D|}{|\mathcal {T}|}
\right)\left[
    2 - \exp\left(
        -\frac{|\mathcal D|}{|\mathcal X||\mathcal {T}|}
    \right)
\right]^{|\mathcal X|},
\end{align*}
where the third line uses $(1+\nicefrac{1}{n})^n\rightarrow e$ when $n\rightarrow\infty$.

\end{proof}

We next provide a lemma to estimate the probability that a random graph is connected.

\begin{lemma}
    \label{lem:random_graph}
    \citep{gilbert1959random} Suppose a graph $G$ is constructed
from a set of $N$ nodes in which each one of the $\nicefrac{N(N-1)}{2}$ possible links is present with probability $p$ independently. The probability that $G$ is connected can be estimated as:
\begin{align*}
    P(\text{connected}\mid G) \sim 1 - N (1 - p)^{N-1}.
\end{align*}

\end{lemma}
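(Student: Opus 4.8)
The plan is to reduce connectivity to the absence of isolated vertices, and then estimate the latter by complementary first- and second-moment arguments. Write $q = 1-p$, and for each vertex let $A_i$ be the event that vertex $i$ has no incident edge. Since the $N-1$ potential edges at $i$ are absent independently, $P(A_i) = q^{N-1}$, so by linearity the expected number of isolated vertices is $\mu = N q^{N-1}$. The target estimate $P(\text{connected}\mid G)\sim 1 - N(1-p)^{N-1}$ is equivalent to $P(\text{disconnected})\sim \mu$, and I would establish this in the regime $\mu = N(1-p)^{N-1}\to 0$ (i.e. just above the connectivity threshold), where the stated right-hand side is a meaningful small quantity; outside this regime the approximation $1-\mu$ is not the correct leading order (the number of isolated vertices becomes Poissonian and $P(\text{disconnected})\to 1-e^{-\mu}$).

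For the upper bound I would union-bound over separating vertex sets. If $G$ is disconnected, some nonempty $S$ with $1\le |S|\le \lfloor N/2\rfloor$ is a union of connected components, so none of the $|S|(N-|S|)$ edges between $S$ and $V\setminus S$ is present; a fixed $k$-set is separated with probability $q^{k(N-k)}$. Hence
\begin{align*}
P(\text{disconnected}) \;\le\; \sum_{k=1}^{\lfloor N/2\rfloor}\binom{N}{k}\,q^{k(N-k)},
\end{align*}
whose $k=1$ term is exactly $\mu = N q^{N-1}$. It remains to show the tail $k\ge 2$ is $o(\mu)$: the ratio of the $k=2$ term to the $k=1$ term is $\tfrac{N-1}{2}q^{N-3}=\Theta(\mu/q^2)\to 0$, and the higher terms are controlled by splitting the sum (small $k$, where each term is of order $\mu^k$ up to lower-order factors, and large $k$ near $N/2$, where the super-exponential decay of $q^{k(N-k)}\le q^{kN/2}$ easily overwhelms $\binom{N}{k}\le 2^N$).

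For the matching lower bound I would apply Bonferroni's inequality to the isolated-vertex events: $P\!\left(\bigcup_i A_i\right)\ge \sum_i P(A_i) - \sum_{i<j}P(A_i\cap A_j) = \mu - \binom{N}{2}q^{2N-3}$, using that two vertices are simultaneously isolated iff all $2N-3$ edges touching them are absent. The correction term has ratio $\tfrac{N-1}{2}q^{N-2}=\Theta(\mu/q)\to 0$ relative to $\mu$, so $P(\text{disconnected})\ge P\!\left(\bigcup_i A_i\right) = \mu(1-o(1))$. Combining this with the upper bound gives $P(\text{disconnected}) = \mu(1+o(1))$, which rearranges to the claimed $P(\text{connected}\mid G)\sim 1 - N(1-p)^{N-1}$.

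The main obstacle I anticipate is the tail estimate in the upper bound, namely controlling $\sum_{k=2}^{\lfloor N/2\rfloor}\binom{N}{k}q^{k(N-k)}$ uniformly in $k$ and showing it is negligible against $\mu$. The difficulty is that both $\binom{N}{k}$ and the separation probability $q^{k(N-k)}$ vary sharply with $k$, so a single crude bound such as $\binom{N}{k}\le N^k$ is too lossy when $N\mu\to\infty$; the clean route is the two-regime split described above, which also pins down precisely the range of $p$ for which the asymptotic is valid. The first-moment upper bound and second-moment (Bonferroni) lower bound are otherwise routine, so essentially all of the real work concentrates in verifying that isolated vertices are the unique dominant obstruction to connectivity.
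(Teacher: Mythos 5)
Your proof is correct, but there is nothing in the paper to compare it against: the paper states this lemma as a quoted result of \citet{gilbert1959random} and uses it as a black box, offering no proof. Gilbert's original derivation proceeds via the recursion $1=\sum_{k=1}^{N}\binom{N-1}{k-1}P_k\,q^{k(N-k)}$ for the connectedness probability $P_N$ (conditioning on the component containing a fixed vertex) and extracts $1-P_N\sim Nq^{N-1}$ as the leading term, whereas you give the modern Erd\H{o}s--R\'enyi-style two-moment argument: a union bound over vertex cuts of size $k\le \lfloor N/2\rfloor$ for the upper bound on disconnection, and Bonferroni on the isolated-vertex events $A_i$ for the matching lower bound. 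Your route is self-contained and, importantly, makes explicit a caveat the paper glosses over: $1-Nq^{N-1}$ is the correct leading order only in the regime $\mu=Nq^{N-1}\to 0$, while for $\mu=\Theta(1)$ the isolated-vertex count is asymptotically Poisson and $P(\text{connected})\to e^{-\mu}$, of which $1-\mu$ is merely the first-order expansion --- a distinction that is harmless here because the paper's Theorem 2 is itself only a heuristic estimate (the IG's edges are not truly independent, and the per-edge probability is itself an asymptotic from the preceding lemma). Two small points to tighten if you write this out in full: your ratio bounds $\Theta(\mu/q^2)$ and $\Theta(\mu/q)$ implicitly assume $q$ bounded away from $0$, which holds in the only interesting regime ($p$ near the threshold $\log N/N$, so $q\to 1$) but should be stated; and for the small-$k$ tail the $k$-th term is $\frac{\mu^k}{k!}q^{-k(k-1)}$, so ``of order $\mu^k$'' requires cutting the sum at $k$ small enough that $q^{-k(k-1)}=O(1)$ before switching to the crude $\binom{N}{k}q^{k(N-k)}\le 2^N q^{kN/2}$ bound for large $k$. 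Both repairs are routine, and your identification of isolated vertices as the unique dominant obstruction is exactly the right idea.
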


Applying Theorem \ref{thm:identifiability_prob}, Lemma \ref{lem:link_prob} and Lemma \ref{lem:random_graph}, we obtain:
\begin{align*}
P(\text{identifiability}\mid \mathcal D)&\sim 1 - |\mathcal T| \left\{
    \exp\left(
        -\frac{|\mathcal D|}{|\mathcal {T}|}
    \right)\left[
        2 - \exp\left(
            -\frac{|\mathcal D|}{|\mathcal X||\mathcal {T}|}
        \right)
    \right]^{|\mathcal X|}
\right\}^{|\mathcal T| - 1} \\
&= 1 - |\mathcal T| \exp\left[
    -|\mathcal D|\left(1 - \frac{1}{|\mathcal {T}|} \right)
\right] \left[
    2 - \exp\left(
        -\frac{|\mathcal D|}{|\mathcal X||\mathcal {T}|}
    \right)
\right]^{|\mathcal X|(|\mathcal T| - 1)}\\
&\sim 1 - |\mathcal T| \exp\left(
    -|\mathcal D|
\right) \left[
    2 - \exp\left(
        -\frac{|\mathcal D|}{|\mathcal X||\mathcal {T}|}
    \right)
\right]^{|\mathcal X||\mathcal T|}\\
&=1 - |\mathcal T| \exp\left(
    -|\mathcal D| + |\mathcal X||\mathcal {T}|\log \left[
        2 - \exp\left(
            -\frac{|\mathcal D|}{|\mathcal X||\mathcal {T}|}
        \right)
    \right]
\right)
\end{align*}
where the third line uses $\nicefrac{1}{|\mathcal T|} \rightarrow 0$ and $|\mathcal{T}| - 1 \rightarrow |\mathcal{T}|$ when $|\mathcal T|$ is large enough.

\subsection{Proof for Proposition \ref{thm:variance}}

\label{sec:app_proof_3}

Note that $Nr'(\vx)o'(\vt)$ follows a binomial distribution, \ie, 
\begin{align*}
    Nr'(\vx)o'(\vt)\sim B(N, r(\vx)o(\vt)),
\end{align*}
which implies:
\begin{align*}
    \mathbb E[Nr'(\vx)o'(\vt)]=Nr(\vx)o(\vt),\quad \mathbb V[Nr'(\vx)o'(\vt)] = N r(\vx)o(\vt)[1 - r(\vx)o(\vt)].
\end{align*}
Denote $g(\vt)=\nicefrac{o'(\vt)}{o(\vt)}$, then we have:
\begin{align*}
    \mathbb E[g(\vt)\mid r'(\vx)] &= \frac{Nr(\vx)o(\vt)}{Nr'(\vx)o(\vt)}=\frac{r(\vx)}{r'(\vx)},\\
    \mathbb V[g(\vt)\mid r'(\vx)]&=\frac{N r(\vx)o(\vt)[1 - r(\vx)o(\vt)]}{[Nr'(\vx)o(\vt)]^2}=\frac{r(\vx)(1-r(\vx)o(\vt))}{N r'(\vx)^2 o(\vt)}.
\end{align*}
Since $c(\vx, \vt)$ are sampled i.i.d., $g(\vt)$ is independent of $g(\vt')$ conditioned on $r'(\vx)$. Therefore,
\begin{align*}
    \mathbb E[g(\vt_1) - g(\vt_2)\mid r'(\vx)] &= \frac{r(\vx)}{r'(\vx)} - \frac{r(\vx)}{r'(\vx)} = 0,\\
    \mathbb V[g(\vt_1) - g(\vt_2)\mid r'(\vx)] &= \mathbb V[g(\vt_1)\mid r'(\vx)] + \mathbb V[g(\vt_2)\mid r'(\vx)]\\
    &=\frac{r(\vx)(1-r(\vx)o(\vt_1))}{N r'(\vx)^2 o(\vt_1)} + \frac{r(\vx)(1-r(\vx)o(\vt_2))}{N r'(\vx)^2 o(\vt_2)}\\
    &=\frac{r(\vx)^2}{N r'(\vx)^2} \left[
        \frac{1-r(\vx)o(\vt_1)}{r(\vx)o(\vt_1)}+
        \frac{1-r(\vx)o(\vt_2)}{r(\vx)o(\vt_2)}
    \right]\\
    &=\frac{1}{N R} \left[
        \frac{1}{r(\vx)o(\vt_1)}+
        \frac{1}{r(\vx)o(\vt_2)} - 2
    \right].
\end{align*}




\subsection{Proof for Proposition \ref{thm:merging_bound}}

\label{sec:app_proof_4}

We first separate the dataset $\mathcal D$ into two parts: $\mathcal{D}_1$ (corresponding to $G_1$) and $\mathcal{D}_2$ (corresponding to $G_2$), formally,
\begin{align*}
    \mathcal{D}_1&=\{(\vx, \vt)\in D\mid \vt\in V_1\},\\
    \mathcal{D}_2&=\{(\vx, \vt)\in D\mid \vt\in V_2\}.
\end{align*}

According to Theorem \ref{thm:graph}, the relevance model $r(\vx)$ ($\vx\in \{\vx_i\mid (\vx_i, \vt_i)\in \mathcal{D}_1 \}$) is identifiable on the dataset $\mathcal{D}_1$, and the relevance model $r(\vx)$ ($\vx\in \{\vx_i\mid (\vx_i, \vt_i)\in \mathcal{D}_2 \}$) is identifiable on the dataset $\mathcal{D}_2$. That is,
\begin{align}
    \nonumber
    \frac{r'(\vx_a)}{r(\vx_a)} = \frac{r'(\vx_b)}{r(\vx_b)}, \quad\forall \vx_a, \vx_b \in \{\vx_i\mid (\vx_i, \vt_i)\in \mathcal{D}_1 \},\\
    \label{eq:thm4_identifiable_two}
    \frac{r'(\vx_c)}{r(\vx_c)} = \frac{r'(\vx_d)}{r(\vx_d)}, \quad\forall \vx_c, \vx_d \in \{\vx_i\mid (\vx_i, \vt_i)\in \mathcal{D}_2 \}.
\end{align}

Since we have assumed that $\vx_1$ and $\vx_2$ appear in $\mathcal D$, we can find $\vt_1$ and $\vt_2$ such that $(\vx_1, \vt_1)\in\mathcal D$ and $(\vx_2, \vt_2)\in\mathcal D$.

(1) If $\vt_1\in V_1 \land \vt_2\in V_1$, or $\vt_1\in V_2 \land \vt_2\in V_2$, then according to \Eqref{eq:thm4_identifiable_two}, 
\begin{align}
    \label{eq:thm4_1}
    \left|\frac{r'(\vx_1)}{r(\vx_1)} - \frac{r'(\vx_2)}{r(\vx_2)}\right| = 0.
\end{align}

(2) Otherwise, without loss of generality we suppose $\vt_1\in V_1 \land \vt_2\in V_2$. For $t'$ and $t''$, we can find $\vx'$ and $\vx''$ such that $(\vx', \vt')\in\mathcal D_1$ and $(\vx'', \vt'')\in\mathcal D_2$. According to \Eqref{eq:thm4_identifiable_two}, 
\begin{align*}
    \frac{r'(\vx_1)}{r(\vx_1)} = \frac{r'(\vx')}{r(\vx')},\quad \frac{r'(\vx_2)}{r(\vx_2)} = \frac{r'(\vx'')}{r(\vx'')}.
\end{align*}

Since
\begin{align*}
    \frac{r'(\vx')}{r(\vx')} = \frac{o(\vt')}{o'(\vt')},\quad \frac{r'(\vx'')}{r(\vx'')} = \frac{o(\vt'')}{o'(\vt'')},
\end{align*}
we have
\begin{align}
    \label{eq:thm4_2}
    \left|
        \frac{r'(\vx_1)}{r(\vx_1)} - \frac{r'(\vx_2)}{r(\vx_2)}
    \right|=\left|
        \frac{o(\vt')}{o'(\vt')} - \frac{o(\vt'')}{o'(\vt'')}
    \right|=\left|
        \frac{o(\vt') - o(\vt'')}{o'(\vt')}
    \right|,
\end{align}
where we use the fact that $o'(\vt')=o'(\vt'')$. 

Combining \Eqref{eq:thm4_1} and \Eqref{eq:thm4_2} we obtain the desired result.

\subsection{Further theoretical analysis on node merging}\label{sec:error_bound_merging}

In this section, we provide further theoretical analysis on node merging, including the consistency guarantee and the error bound. We first formally introduce the Minimum Spanning Tree (MST) construction process of node merging, laying the groundwork for the subsequent analysis. Suppose an IG consists of $K$ connected components $\{G_i=(V_i, E_i)\}_{i=1}^K$. A node merging algorithm merges $K-1$ pairs of nodes $\mathcal M=\{(\vt_{a_i}, \vt_{b_i})\}_{i=1}^{K-1}$, forcing $o'(\vt_{a_i})=o'(\vt_{b_i})$. Based on it, we construct the following weighted connected graph $\mathcal G=(\mathcal V, \mathcal E)$ built on $K$ components:
\begin{align*}
    \mathcal V&=\{G_1,G_2,\cdots,G_K\},\\
    \mathcal E&=\{(G_i, G_j, w_{i,j})\mid \exists (\vt_i, \vt_j)\in \mathcal M, \text{s.t., } \vt_i\in V_i , \vt_j\in V_j , w_{i,j} := |\nicefrac{o(\vt_i)-o(\vt_j)}{o'(\vt_i)}| \},
\end{align*}
where $(G_i, G_j, w)$ denotes an edge connecting $(G_i, G_j)$ with a weight $w$. Notably, each merging pair in $\mathcal M$ corresponds to an edge in $\mathcal E$ exactly.

\subsubsection{Consistency of node merging}

Node merging enforces a constraint on function $o'$ to have identical values at certain bias factors. A concern arises that this additional constraint might conflict with the conditions for identifiability (\ie, $c \cdot r = c' \cdot r'$). Fortunately, such conflicts are absent when the graph $\mathcal G$ is acyclic, ensuring the node merging approach is \textbf{consistent}, as detailed in the following proposition.

\begin{proposition}[Consistency of node merging]
    Given an acyclic graph $\mathcal G$ with merging pairs $\mathcal M={(\vt_{a_i}, \vt_{b_i})}_{i=1}^{K-1}$, the true relevance $r(\cdot)$, and the true observation $o(\cdot)$, we can find functions $r'$ and $o'$ such that they satisfy:

    Condition 1: Unbiased click probability estimation: $r(\vx) \cdot o(\vt) = r'(\vx) \cdot o'(\vt), \quad \forall (\vx, \vt)\in\mathcal D$;

    Condition 2: Compliance with node merging constraints: $o'(\vt_{a_i})=o'(\vt_{b_i}), \quad \forall (\vt_{a_i}, \vt_{b_i}) \in \mathcal M$. 
\end{proposition}

\begin{proof}
    The proof employs induction.
    
    \textbf{Base step}: At $K=1$, the IG is connected, free from node merging constraints. Setting $r'(\cdot)=C r(\cdot)$ and $o'(\cdot)=\nicefrac{o(\cdot)}{C}$ for any positive constant $C$ satisfies unbiased click probability estimation.
    
    \textbf{Inductive step}: Assuming conditions 1 and 2 hold for $K=n$ ($n\geq 1$), we examine $K=n+1$. Given $\mathcal G$ as an acyclic graph, we can find a $G_i=(V_i, E_i)\in \mathcal V$ with a single edge connecting it. Let the corresponding merging pair be $(\vt_a, \vt_b)$, where $\vt_a\in V_i$, and $\vt_b$ belongs to another node in $G\; \textbackslash\; G_i$. The two conditions are met in $G\; \textbackslash\; G_i$ (with $n$ nodes) by induction assumption. We need only find $r'$ and $o'$ for $G_i$, as follows:
    \begin{align}
        \label{eq:construct_consistency}
        o'(\vt) = o(\vt)\frac{o'(\vt_b)}{o(\vt_a)}, \quad r'(\vx) = r(\vx)\frac{o(\vt_a)}{o'(\vt_b)}, \quad \forall (\vx, \vt)\in \mathcal D \land \vt \in V_i,
    \end{align}
    with $o'(\vt_b)$'s value fixed in $G\; \textbackslash\; G_i$, meeting both conditions by induction. It is easy to verify that \Eqref{eq:construct_consistency} meets both conditions as well.
    
\end{proof}

Note that in cases where $\mathcal G$ is not a tree and contains cycles, the inductive step fails at the step that finding a node with a single edge connecting it, resulting in inconsistency.

\subsubsection{Error bound for node merging}

Based on the above notation, we derive the following error bound for node merging.

\begin{proposition}[Error bound of node merging]
    \begin{align*}
        r(\vx)\cdot o(\vt) = r'(\vx)\cdot o'(\vt) \quad \Longrightarrow \quad \left|\frac{r'(\vx_1)}{r(\vx_1)} - \frac{r'(\vx_2)}{r(\vx_2)}\right| \leq D(\mathcal G),\quad \forall \vx_1, \vx_2\in\mathcal X,
    \end{align*}
    where $D(\mathcal G)$ is the diameter of $\mathcal G$, \ie, the maximal distance between nodes in $\mathcal G$.
\end{proposition}
\begin{proof}
In our proof, we employ Proposition \ref{thm:merging_bound} and the triangle inequality on the paths within $\mathcal G$. Initially, define $\vx(G_i) = \{\vx \mid \exists \vt \in V_i, \text{s.t. } (\vx, \vt) \in \mathcal D\}$ to represent all features associated with the bias factors in component $G_i$. It is crucial to note that a feature cannot belong to multiple components, as this would imply connectivity between these components through the said feature.

    For any two components $G_s$ and $G_t$, we find a path in $\mathcal G$ connecting them, \ie, $G_{a_1}\rightarrow G_{a_2}\rightarrow\cdots\rightarrow G_{a_n}$ with $a_1=s$ and $a_n=t$. We have:
    \begin{align*}
        \forall \vx_1\in \vx(G_{a_1}), \vx_2\in \vx(G_{a_2}), &\cdots, \vx_n\in \vx(G_{a_n}),\\
        \left|\frac{r'(\vx_1)}{r(\vx_1)} - \frac{r'(\vx_n)}{r(\vx_n)}\right| &= \left|
        \left(
            \frac{r'(\vx_1)}{r(\vx_1)} - \frac{r'(\vx_2)}{r(\vx_2)}
        \right) + \left(
            \frac{r'(\vx_2)}{r(\vx_2)} - \frac{r'(\vx_3)}{r(\vx_3)}
        \right) + \cdots + \left(
            \frac{r'(\vx_{n-1})}{r(\vx_{n-1})} - \frac{r'(\vx_n)}{r(\vx_n)}
        \right)
        \right|\\
        &\leq \left|
            \frac{r'(\vx_1)}{r(\vx_1)} - \frac{r'(\vx_2)}{r(\vx_2)}
        \right| + \left|
            \frac{r'(\vx_2)}{r(\vx_2)} - \frac{r'(\vx_3)}{r(\vx_3)}
        \right| + \cdots + \left|
            \frac{r'(\vx_{n-1})}{r(\vx_{n-1})} - \frac{r'(\vx_n)}{r(\vx_n)}
        \right|\\
        &\leq w_{a_1,a_2} + w_{a_2,a_3} + \cdots + w_{a_{n-1},a_n}.
    \end{align*}

    Here, the first inequality is derived using the triangle inequality, while the second follows from the definition of $\mathcal G$ and Proposition \ref{thm:merging_bound}.    Consequently, the error between $\vx_1$ and $\vx_n$ $\left|\nicefrac{r'(\vx_1)}{r(\vx_1)} - \nicefrac{r'(\vx_n)}{r(\vx_n)}\right|$ is limited by the cumulative weights of the path connecting $G_s$ and $G_t$. Given the arbitrary nature of this selection, the error between any two features is constrained by the diameter of $\mathcal G$.
\end{proof}

\section{Algorithms}

\subsection{Identifiability check}

\label{sec:app_algo_idcheck}

Based on Theorem \ref{thm:graph}, we illustrate the identifiability check in Algorithm \ref{alg:identifiability_check}. In lines 1-8, we construct a mapping, from feature to the bias factor sets that ever appear together with it. In lines 9-15, we connect the bias factor set for each feature to a complete graph, since they are all related to the same feature and thus connect.

\SetKwComment{Comment}{$\triangleright$\ }{}

\begin{algorithm}

\label{alg:identifiability_check}
\caption{Identifiability check}
\KwIn{Dataset $\mathcal D=\{ (\vx_i, \vt_i) \}_{i=1}^{|\mathcal D|}$}
\KwOut{Whether a relevance model trained on $\mathcal D$ is identifiable}
$S \gets \text{Dictionary()}$ \Comment*[r]{Initialize $S$ with an empty dictionary}
\Comment{Construct a mapping: feature $\rightarrow$ bias factors list}
\For{$i=1$ \KwTo $|\mathcal D|$}{ 
    \uIf{$\vx_i \notin S$}{
        $S[\vx_i] \gets \{\vt_i\} $\;
    }
    \Else{
    $S[\vx_i] \gets S[\vx_i] \cup \{\vt_i\} $\;
    }
}
$V \gets \{v_1, v_2, \cdots, v_{|\mathcal{T}|}\}$\;
$E \gets \varnothing$\;
\Comment{Construct identifiability graph (IG) based on $S$}
\For{$\vx\in S$}{
    \For{$\vt_1, \vt_2 \in S[\vx] \times S[\vx]$}{
        $E \gets E \cup \{ (v_{\vt_1}, v_{\vt_2}) \}$\;
    }
}
\uIf{$G=(V, E)$ is connected}{
    \Return{true}
}
\Else{
    \Return{false}
}
\end{algorithm}

\subsection{Full algorithm for node intervention}
\label{sec:app_algo_ni}

We illustrate the full algorithm for node intervention (\sref{sec:node_intervention}) in Algorithm \ref{alg:node_intervention}. Here we use Prim's algorithm to find the MST. In lines 1-2, we construct the IG and find $K$ connected components. In line 3, we initialize the intervention set. In line 4, we initialize the found node set to the first connected components for running Prim's algorithm. In line 7, we traverse the components in the unfound set $U-U'$ (denoted by $G_i$) and in the found set $U'$ (denoted by $G_j$), and compute the intervention cost and the best intervention pair between $G_i$ and $G_j$ in line 8. If the cost is the best, we record the cost, intervention pair, and the target component in lines 10-12. Finally, we add the best intervention pair we found in line 15 and update the found set in line 16 for Prim's algorithm. 

\begin{algorithm}

\label{alg:node_intervention}
\caption{Node intervention}
\KwIn{Dataset $\mathcal D=\{ (\vx_i, \vt_i) \}_{i=1}^{|\mathcal D|}$}
\KwOut{Intervention set $E'$}
Construct the IG $G=(V, E)$ on $\mathcal D$ using Algorithm \ref{alg:identifiability_check}\;
$U\gets \{G_1=(V_1, E_1), \cdots, G_K=(V_K, E_K)\}$ denoting $K$ connected components for $G$\;
$E'\gets \{\}$ \Comment*[r]{Initialize intervention set}
$U'\gets \{G_1\}$ \Comment*[r]{Initialize found nodes for Prim's algorithm}
\Comment{Construct an MST using Prim's algorithm}
\While(){$|U'|\neq K$}{
    $c_{\min} = + \infty$\;
    \For{$G_i\in U - U', G_j\in U'$}{
        Compute $\vx^{(i, j)}, \vt^{(i, j)}$ and the intervention cost $c$ based on \Eqref{eq:node_intervention_cost} - \Eqref{eq:node_intervention_d}\;
        \If{$c< c_{\min}$}{
            $c_{\min}\gets c$\;
            $\vx^*, \vt^*\gets \vx^{(i, j)}, \vt^{(i, j)}$\;
            $G^*\gets G_j$\;
        }
    }
    $E'\gets E'\cup \{(\vx^*, \vt^*)\}$ \Comment*[r]{Add the best intervention pair}
    $U'\gets U'\cup \{G^*\}$ \Comment*[r]{Update found nodes using Prim's algorithm}
}
\Return{$E'$}
\end{algorithm}

\subsection{Full algorithm for node merging}
\label{sec:app_algo_nm}

We illustrate the full algorithm for node merging (\sref{sec:node_merging}) in Algorithm \ref{alg:node_merging}. 

\begin{algorithm}

\label{alg:node_merging}
\caption{Node merging}
\KwIn{Dataset $\mathcal D=\{ (\vx_i, \vt_i) \}_{i=1}^{|\mathcal D|}$}
\KwOut{Merging set $E'$}
Construct the IG $G=(V, E)$ on $\mathcal D$ using Algorithm \ref{alg:identifiability_check}\;
$U\gets \{G_1=(V_1, E_1), \cdots, G_K=(V_K, E_K)\}$ denoting $K$ connected components for $G$\;
$E'\gets \{\}$ \Comment*[r]{Initialize merging set}
$U'\gets \{G_1\}$ \Comment*[r]{Initialize found nodes for Prim's algorithm}
\Comment{Construct an MST using Prim's algorithm}
\While{$|U'|\neq K$}{
    $c_{\min} = + \infty$\;
    \For{$G_i\in U - U', G_j\in U'$}{
        Compute $\vt^{*}_i, \vt^{*}_j$ and the merging cost $c$ on \Eqref{eq:node_merge_cost} - \Eqref{eq:node_merge_t12}\;
        \If{$c< c_{\min}$}{
            $c_{\min}\gets c$\;
            $\vt^*_{A}, \vt^*_{B}\gets \vt^{*}_i, \vt^{*}_j$\;
            $G^*\gets G_j$\;
        }
    }
    $E'\gets E'\cup \{(\vt^*_{A}, \vt^*_{B})\}$ \Comment*[r]{Add the best merging pair}
    $U'\gets U'\cup \{G^*\}$ \Comment*[r]{Update found nodes using Prim's algorithm}
}
\Return{$E'$}
\end{algorithm}

Similar to node intervention, here we also use Prim's algorithm to find the MST. In lines 1-2, we construct the IG and find $K$ connected components. In line 3, we initialize the merging set. In line 4, we initialize the found node set to the first connected components for running Prim's algorithm. In line 7, we traverse the components in the unfound set $U-U'$ (denoted by $G_i$) and in the found set $U'$ (denoted by $G_j$) and compute the merging cost and the best intervention pair between $G_i$ and $G_j$ in line 8. If the cost is the best, we record the cost, merging the pair and the target component in lines 10-12. Finally, we add the best merging pair we found in line 15 and update the found set in line 16 for Prim's algorithm.

\newpage

\begin{figure*}[htbp]
    \centering
    \subfigure[$K=1$]{
        \includegraphics[width=0.1\textwidth]{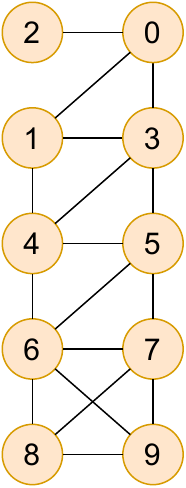}
    }
    \hspace{2.5em}
    \subfigure[$K=2$]{
        \includegraphics[width=0.1\textwidth]{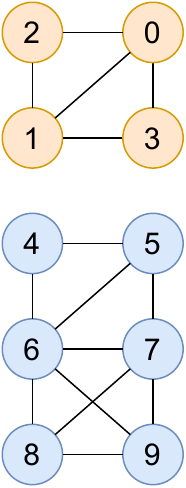}
    }
    \hspace{2.5em}
    \subfigure[$K=3$]{
        \includegraphics[width=0.1\textwidth]{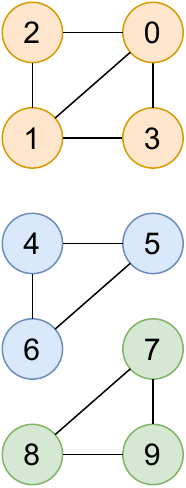}
    }
    \hspace{2.5em}
    \subfigure[$K=4$]{
        \includegraphics[width=0.1\textwidth]{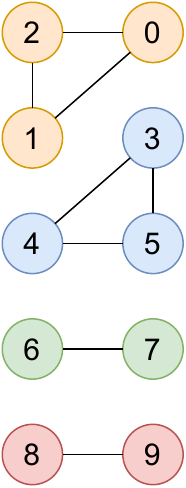}
    }
    \caption{IGs of the simulated datasets. Numbers in the nodes denote the position index (starting from 0 to 9). }
    \label{fig:experiment_simulation_ig}
\end{figure*}

\begin{table}[htbp]
    \centering
    \begin{minipage}{0.45\linewidth}
    
    \centering
    \caption{Dataset statistics}
    \label{tab:data_statistics}
    \small
    \begin{tabular}{lll}
    \toprule
              & Yahoo!  & Istella-S \\ \midrule
    \# Queries   & 28,719  & 32,968    \\
    \# Documents & 700,153 & 3,406,167 \\
    \# Features  & 700     & 220       \\
    \# Relevance levels  & 5     & 5       \\ \bottomrule
    \end{tabular}
    \end{minipage}
    \hfill
    \begin{minipage}{0.45\linewidth}
    \caption{Identifiability graph statistics}
    \centering
    \small
    \label{tab:ig_real}
    \begin{tabular}{lll}
    \toprule
              & Yahoo!  & Istella-S \\ \midrule
    \# Nodes   & 48,894  & 48,919    \\
    \# Edges & 59,811,898 & 5,972 \\
    \# Connected components (CC)  &  28,958    & 47,976       \\
    \# Nodes in the Top 1 CC  &  15,684    & 101       \\ 
    \# Nodes in the Top 2 CC  &  98  & 7       \\ 
    \# Nodes in the Top 3 CC  &  96  & 5       \\ 
    \bottomrule
    \end{tabular}
    \end{minipage}
\end{table}

\section{Experiment details}

\label{sec:app_experiments}

\subsection{Datasets}

\label{sec:app_datasets}

In this section, we show details about the datasets we used in this work, including the simulated datasets and semi-synthetic datasets. The datasets can be downloaded from \url{https://webscope.sandbox.yahoo.com/} (Yahoo!), \url{http://quickrank.isti.cnr.it/istella-dataset/} (Istella-S) and \url{http://www.thuir.cn/tiangong-st/} (TianGong-ST).

\paragraph{Further details of the simulated datasets} 

For a fair comparison, all simulated datasets comprised 10,000 one-hot encoded documents and 1,150 queries with randomly sampled 5-level relevance, and each query contains 10 documents. Figure \ref{fig:experiment_simulation_ig} demonstrates the IGs we used for simulating datasets, where the number of connected components is $K=1,2,3,4$ respectively. 

\paragraph{Further details of the semi-synthetic datasets}

We followed the given data split of training, validation, and testing. To generate initial ranking lists for click simulation, we followed the standard process \citep{joachims2017unbiased,ai2018unbiased,chen2021adapting,chen2022lbd} to train a Ranking SVM model \citep{joachims2006training} with 1\% of the training data with relevance labels, and sort the documents. We used the ULTRA framework \citep{ai2018ultra,ai2021unbiased} to pre-process datasets. Table \ref{tab:data_statistics} shows the characteristics of the two datasets we used. 

Since the IGs of the two datasets we used are too large to visualize, we show several graph characteristics about them in Table \ref{tab:ig_real} where the number of context types is 5,000.

\subsection{Click simulation}

\label{sec:app_click_simulation}

\paragraph{Position-based model}
We sampled clicks according to the examination hypothesis (\Eqref{eq:examination_hypothesis}) for fully simulated datasets. Following the steps proposed by \citet{chapelle2009expected}, we set the relevance probability to be:
\begin{align}
    r(\vx) = \epsilon + (1-\epsilon)\frac{2^{y_{\vx}}-1}{2^{y_{\max}}-1},
\end{align}
where $y_{\vx}$ is the relevance level of $\vx$, and $y_{\max}=5$ in our case. $\epsilon$ is the click noise level and we set $\epsilon=0.1$ as the default setting. For the observation part, following \citet{ai2021unbiased} we adopted the position-based examination probability $o(p)$ for each position $p$ by eye-tracking studies \citep{joachims2005accurately}. 

\paragraph{Contextual position-based model}
For simulating contextual bias on the semi-synthetic dataset, following \citet{fang2019intervention}, we assigned each context id $t$ with a vector $\mX_{t}\in\mathbb R^{10}$ where each element is drawn from $\mathcal N(0, 0.35)$. We followed the same formula as a position-based model for click simulation, while the observation probability takes the following formula:
\begin{align*}
    o(t, p) = o(p)^{\max\{\vw^\top \mX_{t}+1,0\}},
\end{align*}
where $o(p)$ is the position-based examination probability used in the fully synthetic experiment. $\vw$ is fixed to a 10-dimensional vector uniformly drawn from $[-1, 1]$.

\subsection{Training details} 

\label{sec:app_training}

\paragraph{Implementation details of baselines}

\textbf{\textit{DLA}} \citep{ai2018ultra,vardasbi2020cascade,chen2021adapting,chen2023multi} uses the following formula to learn the relevance model $r'$ and observation model $o'$ dually:
\begin{align*}
    r'_{k}(\vx) &\gets \arg\min_{r'(\vx)}\sum_{i=1}^{|\mathcal D|} 
    \mathbbm{1}_{\vx_i=\vx} \mathcal L(c_i, o'_{k-1}(\vt_i), r'(\vx)),\\
    o'_{k}(\vt) &\gets \arg\min_{o'(\vt)}\sum_{i=1}^{|\mathcal D|} \mathbbm{1}_{\vt_i=\vt}
    \mathcal L(c_i, o'(\vt), r'_{k-1}(\vx_i)),
\end{align*}
where $\mathcal D=\left\{(\vx_i, \vt_i, c_i)\right\}_{i=1}^{|\mathcal D|}$ is the dataset, $r'_{k}(\vx)$ and $o'_{k}(\vt)$ are the $k$-th step model output ($1\leq k\leq T$). The initial values for $o'_{0}$ and $r'_{0}$ were randomly initialized from a uniform distribution within the range of $[0, 1]$. After each step, we applied a clipping operation to constrain the outputs within the interval $[0, 1]$. $\mathcal L$ is the training objective function, and we implement it with MSE loss:
\begin{align*}
    \mathcal L(c_i, o_i', r_i')=(c_i-o_i'r_i')^2.
\end{align*}

\textbf{\textit{Regression-EM}} \citep{wang2018position,sarvi2023impact} uses an iterative process similar to DLA, while the relevance model $r'$ and observation model $o'$ are learned as follows:
\begin{align*}
    r'_{k}(\vx) \gets& \frac{\sum_{i=1}^{|\mathcal D|} \mathbbm{1}_{\vx_i=\vx}\left\{ c_i+(1-c_i)\frac{[1-o'_{k-1}(\vt_i)]r'_{k-1}(\vx_i)}{1-o'_{k-1}(\vt_i)r'_{k-1}(\vx_i)}\right\}}{\sum_{i=1}^{|\mathcal D|} \mathbbm{1}_{\vx_i=\vx}},\\
    o'_{k}(\vt) \gets& \frac{\sum_{i=1}^{|\mathcal D|} \mathbbm{1}_{\vt_i=\vt}\left\{ c_i+(1-c_i)\frac{[1-r'_{k-1}(\vx_i)]o'_{k-1}(\vt_i)}{1-o'_{k-1}(\vt_i)r'_{k-1}(\vx_i)}\right\}}{\sum_{i=1}^{|\mathcal D|} \mathbbm{1}_{\vt_i=\vt}}.\\
\end{align*}

\textbf{\textit{Two-Tower}} \citep{guo2019pal,cacm,chen2022lbd} treats $r'$ and $o'$ as two towers and facilitate the multiplication of the outputs of them close to clicks. They use the binary cross entropy loss to train the models by gradient descent, formulated as:
\begin{align*}
    \mathcal L = \sum_{i=1}^{|\mathcal D|}-c_i\log [r'(\vx_i)o'(\vt_i)] - (1-c_i) \log [1-r'(\vx_i)o'(\vt_i)],
\end{align*}
where $r'$ and $o'$ are constrained to $[0, 1]$ by applying a sigmoid function.

\paragraph{Hyper-parameters}

We ran each experiment 10 times and reported the average values as well as the standard deviations. On the fully synthetic datasets, we implemented the ranking and observation models as embedding models and controlled $T=20,000$ to ensure the convergence. On the semi-synthetic datasets, we also implemented the ranking and observation models as embedding models by assigning a unique identifier based on ranking features to each document, which improves the model's ability to fit clicks during training. The number of epochs was $T=100$. After training, to generalize to the test set, we trained a LightGBM \citep{ke2017lightgbm} as a ranking model with the learned relevance embeddings of each feature. The total number of trees
was 500, the learning rate was 0.1, number of leaves for one tree was
255.

\paragraph{Implementation details of node merging}
For node merging, we used the position number as the bias feature on the fully synthetic dataset. On the semi-synthetic dataset, we formed the bias feature $\mX_{p, t}$ for each bias (consisting of position $p$ and context type $t$) as follows: we multiplied $p$ by 10 and added it to the end of the $10$-dimensional context vector $\mX_{t}$, to form an $11$-dimensional bias feature. This method increases the weight of the position, forcing node merging to give priority to merging different context types rather than positions. 

\paragraph{Implementation details of node intervention}
For node intervention on the fully synthetic dataset, we trained the ranking model and observation model using node merging and used their values to implement the cost function (\Eqref{eq:node_intervention_cost}).

\section{Further experiment results}

\begin{figure*}[t]
    \centering
    \subfigure[$K=2$]{
    \includegraphics[width=0.31\textwidth,trim={10 10 10 10},clip]{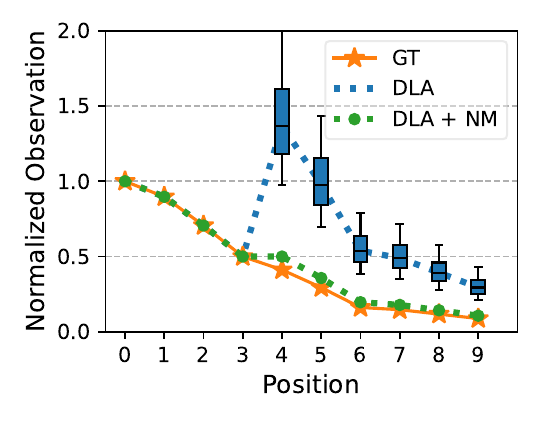}
        \label{fig:observation_k2}
    }
    \subfigure[$K=3$]{
        \includegraphics[width=0.31\textwidth,trim={10 10 10 10},clip]{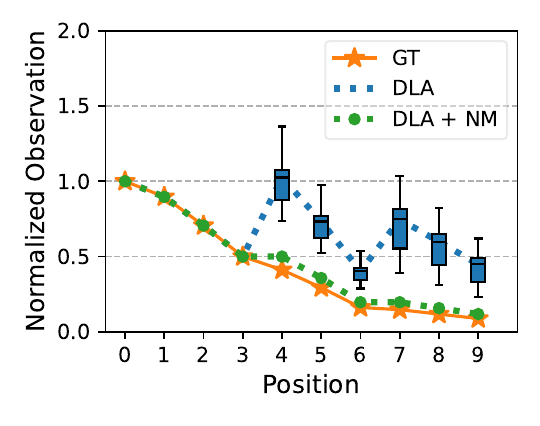}
        \label{fig:observation_k3}
    }
    \subfigure[$K=4$]{
    \includegraphics[width=0.31\textwidth,trim={10 10 10 10},clip]{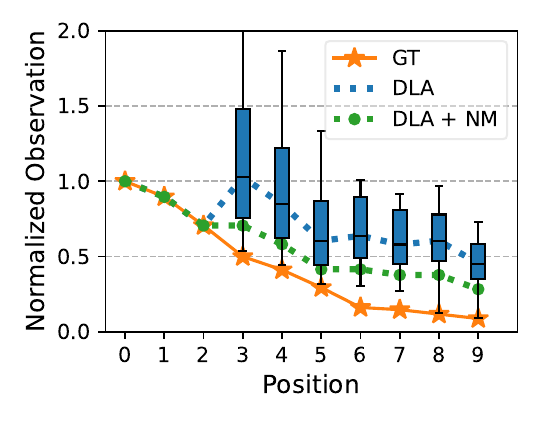}
        \label{fig:observation_k4}
    }
    \caption{Observation curves on each fully simulated dataset. GT = \underline{G}round \underline{T}ruth. NM = \underline{N}ode \underline{M}erging.}
    \label{fig:experiment_observation}
\end{figure*}

\subsection{Observation probability curves}\label{sec:observation}

Observation probability estimation, also known as propensity estimation \citep{agarwal2019estimating,fang2019intervention}, is a critical task in ULTR. Figure \ref{fig:experiment_observation} illustrates ground truth (GT) for observation probability for $K=2$, $K=3$, and $K=4$ datasets, alongside estimations by DLA and node merging (NM). We normalize the model's predictions by dividing the predicted probabilities at each position by the predicted probability of the first position. One can observe substantial volatility in DLA’s predictions under unidentifiability, as reflected in the boxplots. Node merging, by mandating shared observation probabilities for disconnected positions on the IG, achieves model identifiability and closer alignment with the GT observation curve. Notably, node merging’s accuracy for the $K=4$ dataset is inferior to that of $K=3$ and $K=2$, due to the merging of positions with larger observational disparities in $K=4$. This is attributed to the merging of positions with greater differences in observations in $K=4$ (positions 2 versus 3) since they are disconnected (visualized in Figure \ref{fig:experiment_simulation_ig}), as opposed to smaller differences in $K=3$ and $K=2$. Consequently, this results in a relatively poorer performance of node merging in recovering relevance for $K=4$ (as seen in Figure \ref{fig:dataset_components_method}). This phenomenon further corroborates Proposition \ref{thm:merging_bound} and illustrates the dependence of node merging’s performance on dataset characteristics.

\subsection{Further investigation on the identifiability of TianGong-ST}
\label{sec:app_tiangong}

On the TianGong-ST dataset, vertical types are represented in the format ``$v_1\text{\#}v_2$'', \eg, ``-1$\text{\#}$-1'' or ``30000701$\text{\#}$131''. We investigated the consequences of excluding specific bias factors, for example, disregarding either $v_1$ or $v_2$. A summary of the results is presented in Table \ref{tab:ig_tiangong}. Our findings reveal that when one of $v_1$ and $v_2$ is kept, the IG loses its connectivity, demonstrating the prevalence of unidentifiability issues in real-world scenarios. However, when only the position is retained, the IG regains connectivity, rendering it identifiable. We observed the reason is that some queries are repeated in the dataset, with variations in the order of related documents. This observation suggests that the search engine may have done some position intervention during deployment which enhances the IG's connectivity. Overall, it reveals again that introducing excessive bias factors leads to a higher probability of unidentifiability.

\begin{table}[htbp]
    \small
    \caption{Identifiability of TianGong-ST in different bias factor settings.}
    \centering
    \label{tab:ig_tiangong}
    \begin{tabular}{rrrr}
    \toprule
    Bias factor & \# Connected components & Identifiable? \\ \midrule
    ($v_1$, $v_2$, position) & 2,900 & $\times$ \\
    ($v_1$, position) & 1,106 & $\times$ \\
    ($v_2$, position) & 87 & $\times$ \\
    position only & 1 & $\checkmark$ \\
    \bottomrule
    \end{tabular}
\end{table}

\subsection{Impact of the sampling ratio and feature count on the identifiability probabilities}

\label{sec:app_sampling_ratio}

We sampled the datasets randomly according to a sampling ratio 20 times and calculated the frequency that the IG calculated on the sampled datasets is connected, when \textit{positions are the only bias factors}. From Figure \ref{fig:sample_ratio_yahoo} and \ref{fig:sample_ratio_istella}, one can find that although the IGs on both the entire datasets are connected, the subsets of the datasets are not. Specifically, the connectivity of IGs can be guaranteed only when the size of the subset comprises more than approximately 2\% (in the case of Yahoo!) and 50\% (in the case of Istella-S) of the respective dataset.

\begin{figure*}[t]
    \centering
    \subfigure[Yahoo!]{
    \includegraphics[width=0.28\textwidth,trim={10 10 10 10},clip]{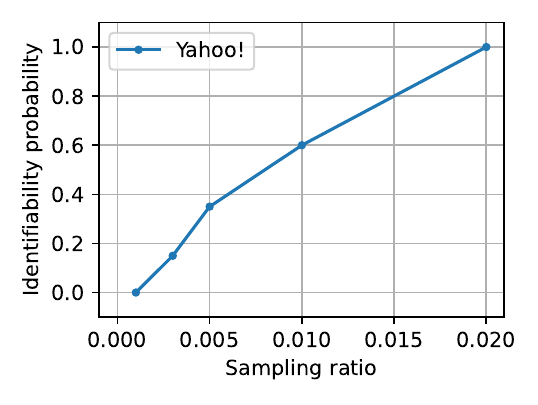}
        \label{fig:sample_ratio_yahoo}
    }
    \subfigure[Istella-S]{
        \includegraphics[width=0.28\textwidth,trim={10 10 10 10},clip]{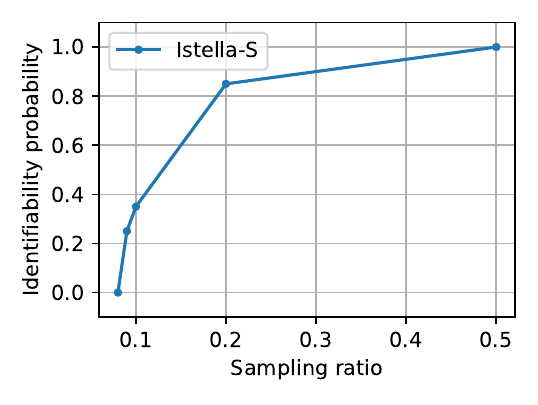}
        \label{fig:sample_ratio_istella}
    }\\
    \subfigure[Yahoo!]{
    \includegraphics[width=0.28\textwidth,trim={10 10 10 10},clip]{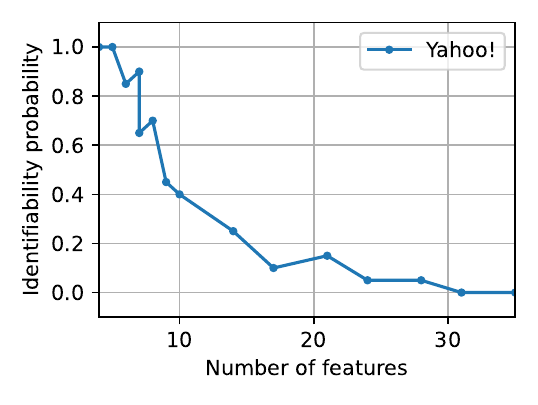}
        \label{fig:sample_feature_yahoo}
    }
    \subfigure[Istella-S]{
        \includegraphics[width=0.28\textwidth,trim={10 10 10 10},clip]{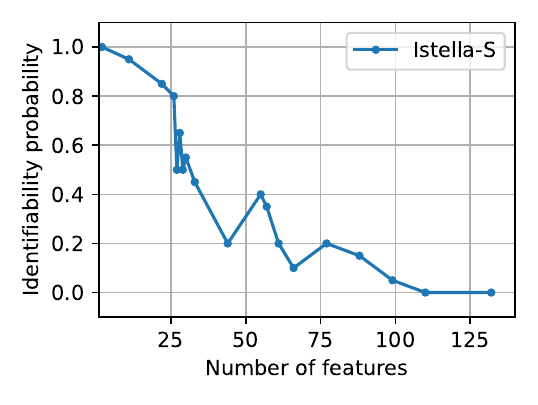}
        \label{fig:sample_feature_istella}
    }
    \caption{The influence of (a)(b) the sampling ratio and (c)(d) number of features of datasets on the identifiability probabilities, on Yahoo! and Istella-S, respectively. }
    \label{fig:experiment_semi_sample}
\end{figure*}

Subsequently, we explored the impact of feature count on identifiability. Due to dataset constraints, increasing the number of features was not feasible. Therefore, we adopted a strategy of progressively removing features from an unidentifiable dataset until the IG became connected. As both datasets are identifiable in the absence of context, we generated 500 contexts for Yahoo! and 50 for Istella-S, as depicted in Figures \ref{fig:sample_context_yahoo} and \ref{fig:sample_context_istella}. The results are displayed in Figure \ref{fig:sample_feature_yahoo} and \ref{fig:sample_feature_istella}. One can observe that the probability of identifiability diminishes with an increase in feature count. Notably, in the presence of a sufficient number of bias factors, a relatively small number of features (5-10) is sufficient to induce unidentifiability.

\subsection{Impact of initialization strategies}\label{app:initialization}

\begin{table*}[htbp]
  \centering
  \small
  \caption{Performance comparison on two datasets under CPBM bias with different initialization strategies on the observation model. We ran each experiment 10 times and reported the average results as well as the standard deviations. }
  \fontsize{8pt}{8pt}\selectfont

\begin{tabular}{clccccccc}
\toprule
\multirow{2}[3]{*}{\textbf{Dataset}} & \multicolumn{1}{c}{\multirow{2}[3]{*}{\textbf{Method}}} & \multicolumn{4}{c}{\textbf{Training}} &      & \multicolumn{2}{c}{\textbf{Test}} \\
\arrayrulecolor{gray!80}\cmidrule{3-6}\cmidrule{8-9}\arrayrulecolor{black}     &      & \textbf{MCC}$\:\uparrow$  & \textbf{nDCG@5}$\:\uparrow$ & \textbf{nDCG@10}$\:\uparrow$ & \textbf{Click MSE} &      & \textbf{nDCG@5}$\:\uparrow$ & \textbf{nDCG@10}$\:\uparrow$ \\
\midrule
\multirow{5}[2]{*}{Yahoo} & No debias & $0.765_{\pm.000}$ & $0.841_{\pm.000}$ & $0.915_{\pm.000}$ & $3.7\times 10^{-4}$ &      & $0.693_{\pm.002}$ & $0.741_{\pm.001}$ \\
     & DLA$^\mathcal I$  & $0.750_{\pm.000}$ & $0.844_{\pm.000}$ & $0.914_{\pm.000}$ & $1.7\times 10^{-5}$ &      & $0.693_{\pm.001}$ & $0.741_{\pm.001}$ \\
     & DLA$^\mathcal R$ & $0.619_{\pm.003}$ & $0.819_{\pm.001}$ & $0.892_{\pm.001}$ & $1.7\times 10^{-5}$ &      & $0.688_{\pm.001}$ & $0.737_{\pm.001}$ \\
     & DLA$^\mathcal I$ + Node merging & \boldmath{}\textbf{$0.771_{\pm.000}$}\unboldmath{} & $0.853_{\pm.000}$ & $0.920_{\pm.000}$ & $4.4\times 10^{-5}$ &      & $0.697_{\pm.001}$ & $0.745_{\pm.001}$ \\
     & DLA$^\mathcal R$ + Node merging & $0.744_{\pm.002}$ & \boldmath{}\textbf{$0.863_{\pm.001}$}\unboldmath{} & \boldmath{}\textbf{$0.921_{\pm.001}$}\unboldmath{} & $4.3\times 10^{-5}$ &      & \boldmath{}\textbf{$0.699_{\pm.001}$}\unboldmath{} & \boldmath{}\textbf{$0.746_{\pm.001}$}\unboldmath{} \\
\arrayrulecolor{gray!80}
\midrule
\arrayrulecolor{black}
\multirow{5}[2]{*}{Istella-S} & No debias & $0.764_{\pm.000}$ & $0.885_{\pm.000}$ & $0.941_{\pm.000}$ & $4.4\times 10^{-5}$ &      & $0.634_{\pm.001}$ & $0.682_{\pm.001}$ \\
     & DLA$^\mathcal I$  & $0.764_{\pm.000}$ & $0.886_{\pm.000}$ & $0.941_{\pm.000}$ & $1.1\times 10^{-6}$ &      & $0.633_{\pm.001}$ & $0.682_{\pm.001}$ \\
     & DLA$^\mathcal R$ & $0.748_{\pm.001}$ & $0.874_{\pm.001}$ & $0.931_{\pm.000}$ & $1.1\times 10^{-6}$ &      & \boldmath{}\textbf{$0.638_{\pm.002}$}\unboldmath{} & \boldmath{}\textbf{$0.686_{\pm.002}$}\unboldmath{} \\
     & DLA$^\mathcal I$ + Node merging & $0.772_{\pm.000}$ & $0.892_{\pm.000}$ & $0.944_{\pm.000}$ & $2.2\times 10^{-5}$ &      & $0.636_{\pm.001}$ & $0.684_{\pm.001}$ \\
     & DLA$^\mathcal R$ + Node merging & \boldmath{}\textbf{$0.782_{\pm.001}$}\unboldmath{} & \boldmath{}\textbf{$0.900_{\pm.001}$}\unboldmath{} & \boldmath{}\textbf{$0.947_{\pm.000}$}\unboldmath{} & $2.2\times 10^{-5}$ &      & \boldmath{}\textbf{$0.638_{\pm.001}$}\unboldmath{} & \boldmath{}\textbf{$0.686_{\pm.001}$}\unboldmath{} \\
\bottomrule
\end{tabular}%

  \label{tab:experiment_real_initialization}%
\end{table*}%

In unidentifiable scenarios, models can converge to various parameters that predict click probability correctly, but only a limited range truly represents correct relevance. Therefore, the choice of model initialization plays a crucial role in determining convergence quality. We examined two strategies for initializing the observation model: setting all bias factors' observation probabilities to 1.0 (denoted by $\mathcal I$) and employing random initialization within [0.0, 1.0] (denoted by $\mathcal R$). Table \ref{tab:experiment_real_initialization} presents the performance of DLA and node merging using different initialization strategies, which show that DLA's efficacy on unidentifiable datasets is highly sensitive to initial observation probabilities, with randomization often leading to subpar convergence. In contrast, for identifiable datasets processed with node merging, DLA exhibits more consistent results, demonstrating resilience to varying initialization approaches and supporting our theory of identifiability.

\end{document}